\documentclass[12pt,a4paper]{article}

\usepackage[T1]{fontenc}
\usepackage[utf8]{inputenc}

\usepackage{amsmath,amssymb,amsthm,mathtools}

\usepackage{bm}
\usepackage{aliascnt}

\usepackage{geometry}
\usepackage{xcolor}
\usepackage{hyperref}
\hypersetup{
  colorlinks=true,
  linkcolor=blue!60!black,
  citecolor=blue!60!black,
  urlcolor=blue!60!black
}

\usepackage[nameinlink,noabbrev]{cleveref}
\usepackage{enumitem}
\usepackage{booktabs}
\usepackage{array}
\usepackage{threeparttable}
\usepackage{etoolbox}   

\theoremstyle{plain}
\newtheorem{theorem}{Theorem}[section]

\newaliascnt{lemma}{theorem}
\newtheorem{lemma}[lemma]{Lemma}
\aliascntresetthe{lemma}

\newaliascnt{proposition}{theorem}
\newtheorem{proposition}[proposition]{Proposition}
\aliascntresetthe{proposition}

\newaliascnt{corollary}{theorem}
\newtheorem{corollary}[corollary]{Corollary}
\aliascntresetthe{corollary}

\theoremstyle{definition}
\newaliascnt{definition}{theorem}
\newtheorem{definition}[definition]{Definition}
\aliascntresetthe{definition}

\newaliascnt{example}{theorem}
\newtheorem{example}[example]{Example}
\aliascntresetthe{example}

\theoremstyle{remark}
\newaliascnt{remark}{theorem}
\newtheorem{remark}[remark]{Remark}
\aliascntresetthe{remark}

\crefname{theorem}{theorem}{theorems}
\Crefname{theorem}{Theorem}{Theorems}
\crefname{lemma}{lemma}{lemmas}
\Crefname{lemma}{Lemma}{Lemmas}
\crefname{proposition}{proposition}{propositions}
\Crefname{proposition}{Proposition}{Propositions}
\crefname{corollary}{corollary}{corollaries}
\Crefname{corollary}{Corollary}{Corollaries}
\crefname{definition}{definition}{definitions}
\Crefname{definition}{Definition}{Definitions}
\crefname{example}{example}{examples}
\Crefname{example}{Example}{Examples}
\crefname{remark}{remark}{remarks}
\Crefname{remark}{Remark}{Remarks}
\crefname{appendix}{Appendix}{appendix}
\Crefname{appendix}{Appendix}{Appendices}

\pretocmd{\appendix}{%
  \crefalias{section}{appendix}%
  \crefalias{subsection}{appendix}%
  \crefalias{subsubsection}{appendix}%
}{}{}

\newcommand{\R}{\mathbb{R}}

\newcommand{\bra}[1]{\langle#1\rvert}
\newcommand{\ket}[1]{\lvert#1\rangle}
\newcommand{\braket}[2]{\langle#1\vert#2\rangle}
\newcommand{\Ry}{R_Y}
\newcommand{\Gclass}[2]{\mathcal{G}_{#1}^{(#2)}}
\newcommand{\Thetag}{\Theta_g}

\newcommand{\wh}[2]{\hat{#1}_{#2}}

\newcommand{\cnot}{\mathrm{CNOT}}

\newcommand{\Iver}[1]{\mathbf{1}_{#1}}
\newcommand{\chimon}[1]{\chi_{#1}}
\newcommand{\indvec}[1]{\mathbf{1}_{#1}}

\newenvironment{keywords}{%
  \par\medskip\noindent\textbf{Keywords.}\enspace\ignorespaces
}{%
  \par\medskip
}
\newenvironment{subjclass}[1][2020]{%
  \par\noindent\textbf{Mathematics Subject Classification (#1).}\enspace\ignorespaces
}{%
  \par\bigskip
}

\title{%
  \textbf{On the encoding complexity of quantum numerical integration:
  an angle-structure characterization}\\[6pt]
  }

\author{%
  F. Chinesta\thanks{PIMM Lab, Arts et Métiers ParisTech, Paris, France. Email: \texttt{francisco.chinesta@ensam.eu}}
  \and
  A. Falcó\thanks{Corresponding author. Departamento de Matemáticas, Física y Ciencias Tecnológicas, Universidad CEU Cardenal Herrera, 
  Alfara del Patriarca, Valencia, Spain. Email: \texttt{afalco@uchceu.es}}
  \and
  D. Falcó--Pomares\thanks{Grupo de Investigación 
  Bisite, Universidad de Salamanca, Calle Espejo s/n, 37007 Salamanca, Spain. 
  Email: \texttt{dfp99@usal.es}}}

\date{\today}

\begin{document}
\maketitle

\begin{abstract}
We study numerical integration on $[0,1]$ by quantum amplitude
estimation (QAE), with emphasis on the cost of constructing the
amplitude oracle. While QAE improves the statistical component of
the integration error, this gain is relevant to the full resource
count only when the integrand admits low encoding complexity.

We introduce a hierarchy of function classes $\mathcal{G}_n^{(d)}$
on the $n$-qubit grid, defined by requiring the angle map
$\Theta_g : \{0,1\}^n \to [0,\pi]$ to be multilinear of degree at
most $d$. Membership is classically checkable in $O(n2^n)$ time via
the Walsh--Hadamard transform (WHT). For $g \in \mathcal{G}_n^{(d)}$,
the encoding operator factorises into
$\sum_{k=0}^d \binom{n}{k}$ multi-controlled $R_Y$ gates
(an upper bound on gate complexity, tight at $d=n$),
interpolating between the affine regime $O(n)$ and the generic
exponential regime.

Combining this with classical discretisation estimates for
$g \in C^\alpha[0,1]$, we derive a depth-versus-accuracy trade-off:
total gate count $O((\log(1/\varepsilon))^d \cdot \varepsilon^{-1})$
suffices to achieve $\varepsilon$-accuracy with constant probability.
For $d=1$ this gives $O(\varepsilon^{-1}\log(1/\varepsilon))$,
below the corresponding classical Monte Carlo sample count at fixed
discretisation for every $\alpha \ge 1$.
We further show that the encoding degree is not determined by the
regularity of the integrand: for every $s \in (0,1/2)$ the class
$\mathcal{G}_n^{(1)}$ contains restrictions of functions
$g_s : [0,1] \to [0,1]$ that belong to $W^{s',2}(0,1)$ for every
$s' < s$ but not to $W^{s,2}(0,1)$.
Encoding cost and Sobolev smoothness are therefore independent
parameters of the problem.

These results identify a mathematically explicit class of integrands
for which the state-preparation cost of QAE-based integration remains
polylogarithmic in the target accuracy.
Hardware experiments on the SpinQ Triangulum (NMR) and IBM Kingston
(superconducting) processors validate the hierarchy at $n=2$:
circuits in $\mathcal{G}_n^{(1)}$ execute reliably on both platforms,
whereas $\mathcal{G}_n^{(2)}$ circuits exceed the Triangulum coherence
budget and decohere completely, confirming the depth bound as a
physical threshold.
\end{abstract}

\begin{keywords}
quantum amplitude estimation;
numerical integration;
encoding complexity;
Walsh--Hadamard transform;
Sobolev regularity;
encoding-circuit complexity;
angle-structure hierarchy
\end{keywords}

\begin{subjclass}[2020]
65D30, 65Y20, 68Q12, 81P68, 41A25, 42C10, 46E35
\end{subjclass}

\section{Introduction}
\label{sec:intro}

The complexity of numerical integration is classically governed by the
interplay between regularity and sampling cost.
For $g \in C^\alpha[0,1]$, an $N$-point quadrature rule of order~$p\leq\alpha$
satisfies $|I - Q_N^{(p)}[g]| = \mathcal{O}(N^{-p})$
\cite{DavisRabinowitz1984}, while Monte Carlo estimation contributes an
additional statistical error of order $\mathcal{O}(M^{-1/2})$ using
$M$ random samples.
The total cost of achieving $\varepsilon$-accuracy therefore reflects
two distinct mechanisms: a discretisation error controlled by the
smoothness of the integrand, and a statistical error controlled by the
number of function evaluations.

Quantum Amplitude Estimation (QAE) \cite{BrassardHoyerMoscaTapp2002}
improves the statistical component of this balance by replacing the
classical Monte Carlo rate $\mathcal{O}(M^{-1/2})$ with the quantum rate
$\mathcal{O}(M^{-1})$ after $M$ oracle calls.
This observation has motivated a growing literature on quantum numerical
integration \cite{AbramsWilliams1999,Montanaro2015} and quantum finance
\cite{WoernerEgger2019,StamatopoulosEtAl2019,EggerEtAl2019}.
From the perspective of numerical analysis, however, the relevant
question is not whether QAE improves sampling complexity in isolation,
but whether this sampling gain remains visible once the cost of
constructing the amplitude oracle is incorporated into the resource
count.

This issue is decisive.
The QAE gain applies to the \emph{estimation} stage, not
automatically to the \emph{encoding} stage.
If the circuit that loads~$g$ into amplitudes has depth~$D$, then the
cost of one QAE run is $D + \mathcal{O}(k)$, where $k$ is the
amplification level.
For sufficiently large~$k$, the gain in the estimation stage may
dominate; but if $D$ grows exponentially with the number of qubits~$n$,
then the resource count is dominated by state preparation.
Indeed, preparing a generic $n$-qubit state requires $\Theta(2^n)$
gates \cite{ShendeBullockMarkov2006,BarencoEtAl1995}.
The work of Carrera Vazquez \& Woerner
\cite{CarreraVazquezWoerner2020} identified practically relevant
function classes admitting efficient encodings and showed how a
Spin-Echo optimisation reduces the depth of repeated amplitude
amplification.
What remained open is the structural question that is central for the
complexity of quantum numerical integration:
\emph{which integrands admit low-complexity encodings, and how can they be
characterised mathematically?}

The state-preparation bottleneck is therefore not merely a technical
detail; it is a structural component of the complexity of the full
numerical method.
A recent survey of quantum alternatives to classical Monte Carlo
\cite{IntalluraKorpas2026} identifies three principal obstacles in
practical QAE implementations: the construction of the amplitude oracle,
the depth of the QAE circuit, and the complexity of state preparation.
It also notes explicitly that the Grover--Rudolph procedure
\cite{GroverRudolph2002} may eliminate the quadratic speedup entirely
for log-concave distributions \cite{Herbert2021}.
At the same time, the survey establishes, through a systematic complexity
comparison for H\"older and Sobolev classes, that quantum query
complexity is asymptotically better than the classical randomised bound
across all regularity regimes, provided that the state-preparation cost
is negligible.

The present paper is complementary to the information-based complexity
analysis of quantum integration in Sobolev classes carried out by
Novak~\cite{Novak2001}, Heinrich~\cite{Heinrich2003}, and
Heinrich \& Novak~\cite{HeinrichNovak2003}.
Those works determine the optimal query rates in a model where the
amplitude oracle is a unit-cost black box; we leave those rates
untouched and address instead the structural question of which
integrands admit low-cost amplitude encodings.
Rather than treating state preparation as a black box, we characterise,
through the angle-structure hierarchy $\{\mathcal{G}_n^{(d)}\}$, the
subclass of integrands for which the encoding operator remains sufficiently simple
for the improved sampling behaviour of QAE not to be dominated by
state-preparation cost.

The distinction from earlier work is precise.
Carrera Vazquez \& Woerner~\cite{CarreraVazquezWoerner2020} identified
a practically relevant subclass of functions admitting low-depth
encodings via a Spin-Echo optimisation, and demonstrated reduced circuit
depth for repeated amplitude amplification; their analysis, however,
takes the encoding cost as given and does not provide a mathematical
characterisation of which integrands admit such encodings or why.
Suzuki et al.~\cite{SuzukiEtAl2019} established the statistical
convergence properties of the MLAE estimator without addressing the
encoding cost at all.
Heinrich~\cite{Heinrich2002} studied the quantum query complexity of
summation and integration in $L_p([0,1]^d)$ within the
information-based complexity framework, establishing optimal convergence
rates in the oracle model; in that setting the amplitude oracle is
treated as a black box whose construction cost is not charged.
The present paper operates in a complementary regime: we charge the
circuit cost of building the oracle and ask which structural property of
the integrand determines that cost.
The answer --- the multilinear degree of the angle map --- is an
invariant that has no counterpart in the oracle model, and it is this
invariant that underlies the encoding-complexity theorem proved here.
We emphasise that our results are statements about circuit cost within
the oracle-construction stage; we make no claim about the
information-based complexity of integration, for which the
deterministic, randomised and quantum rates over Sobolev balls are
already known \cite{Novak2001,Heinrich2002,Heinrich2003}.

Our starting point is the angle map
\[
\Theta_g : \{0,1\}^n \to [0,\pi], \qquad
\Theta_g(b)=2\arcsin\!\bigl(\sqrt{g(x_{i(b)})}\bigr),
\]
where $b=(b_0,\ldots,b_{n-1})\in\{0,1\}^n$,
$i(b) = \sum_{k=0}^{n-1} b_k\,2^k \in \{0,\ldots,2^n-1\}$
is the associated grid index, and
\[
  \mathcal{X}_n := \bigl\{x_{i(b)} = i(b)/2^n : b \in \{0,1\}^n\bigr\}
\]
is the uniform $n$-qubit grid.
The angle map $\Theta_g$ determines the controlled rotations appearing
in the amplitude oracle on~$\mathcal{X}_n$.
We define $\mathcal{G}_n^{(d)}$ as the class of functions
$g:\mathcal{X}_n\to[0,1]$ whose angle map is multilinear of degree at
most~$d$.
This degree is classically computable in $\mathcal{O}(n2^n)$ time via
the Walsh--Hadamard transform and provides an exact structural
description of the corresponding encoding complexity.
For $g|_{\mathcal{X}_n} \in \mathcal{G}_n^{(d)}$, we prove that the
encoding operator factorises exactly into $\sum_{k=0}^d \binom{n}{k}$
multi-controlled $\Ry$ gates, thereby yielding a hierarchy that
interpolates between the affine regime $\mathcal{O}(n)$ and the generic
exponential regime.

Combining this structural result with classical discretisation bounds,
we derive a depth-versus-accuracy trade-off for QAE-based numerical
integration.
In particular, for $g:[0,1]\to[0,1]$ with $g|_{\mathcal{X}_n} \in
\mathcal{G}_n^{(d)}$ and $g \in C^\alpha[0,1]$,
$\varepsilon$-accuracy is achieved with total gate count
\[
\mathcal{O}\!\left(\frac{(\log(1/\varepsilon))^d}{\varepsilon}\right),
\]
and in the affine case $d=1$ this reduces to
$\mathcal{O}(\varepsilon^{-1}\log(1/\varepsilon))$.
We further show that the degree of the angle map is not controlled by
the smoothness of the integrand: $\mathcal{G}_n^{(1)}$ contains
restrictions of functions of arbitrarily low Sobolev regularity, whose
encoding depth is nevertheless $\mathcal{O}(n)$.
The oscillation responsible for low regularity lives \emph{within}
the grid cells, whereas the degree of the angle map depends only on the
cell boundary values; the two are therefore independent.
In this way, the paper identifies a mathematically explicit class of
integrands for which the state-preparation stage of QAE does not
dominate the overall cost.

\paragraph{Main contributions.}
The paper develops a structural complexity theory for the amplitude
encoding step in QAE-based numerical integration.
Its main contributions are as follows.

\begin{enumerate}[label=(\roman*)]
\item
  \textbf{Angle-structure hierarchy} (\cref{sec:hierarchy}).
  We introduce the class $\Gclass{n}{d}$ of functions
  $g:\mathcal{X}_n\to[0,1]$ whose angle map
  $\Thetag:\{0,1\}^n\to[0,\pi]$,
  $\Thetag(b)=2\arcsin(\sqrt{g(x_{i(b)})})$,
  is a multilinear polynomial of degree at most~$d$.
  Membership in $\Gclass{n}{d}$ is checkable in
  $\mathcal{O}(n2^n)$ time via the Walsh--Hadamard transform, and we
  give an intrinsic characterisation of $\Gclass{n}{1}$ as a finite
  system of linear constraints on the values of~$g$.
  We also show that the hierarchy is compatible with the product and
  sum constructions of \cite{CarreraVazquezWoerner2020}: neither
  operation increases the degree of the angle map.

\item
  \textbf{Encoding-circuit theorem} (\cref{sec:hierarchy}).
  For $g\in\Gclass{n}{d}$, the encoding operator admits an exact
  factorisation into $\binom{n}{\leq d}$ multi-controlled $\Ry$ gates,
  one for each multilinear monomial of degree at most~$d$.
  This gives an upper bound on the gate complexity that interpolates
  between $n+1$ in the affine case ($d=1$) and $2^n$ in the generic
  case ($d=n$); the endpoint $d=n$ coincides with the general lower
  bound for arbitrary state preparation \cite{ShendeBullockMarkov2006},
  confirming that the monomial factorisation is optimal when no
  degree constraint is available.
  Whether $\binom{n}{\leq d}$ is tight for $d<n$ is identified as an
  open problem (see \S\ref{sec:conclusions}).

\item
  \textbf{Depth-versus-accuracy trade-off} (\cref{sec:pareto}).
  Combining the gate-count estimate with classical discretisation-error
  bounds, we derive the joint scaling of encoding depth and integration
  accuracy as a function of $n$, $d$, and the quadrature order~$p$.
  The total gate count interpolates between
  $\mathcal{O}(\varepsilon^{-1}\log(1/\varepsilon))$ for $d=1$ and
  $\mathcal{O}(\varepsilon^{-(1+1/p)})$ for $d=n^*$, with a
  polylogarithmic factor $(\log(1/\varepsilon))^d$ at intermediate
  degrees.
  In particular, for $g\in\Gclass{n}{1}$, the total gate count is
  $\mathcal{O}(\varepsilon^{-1}\log(1/\varepsilon))$; under the
  fixed-discretisation comparison made precise in \cref{rem:scope},
  this is below the corresponding classical Monte Carlo sample count
  for every $p\ge 1$.

\item
  \textbf{Independence of encoding degree and regularity}
  (\cref{sec:pareto}).
  We construct, for every $s\in(0,1/2)$, a family of functions
  $g_s:[0,1]\to[0,1]$ with $g_s|_{\mathcal{X}_n}\in\Gclass{n}{1}$ that
  lie in $W^{s',2}(0,1)$ for all $s'<s$ but not in $W^{s,2}(0,1)$.
  The encoding depth of $g_s$ is $\mathcal{O}(n)$ irrespective
  of~$s$, showing that the angle-structure degree is not a smoothness
  invariant and cannot be read off from the regularity of the
  integrand.

\item
  \textbf{Computational illustration} (\cref{sec:experiments}).
  We implement the QAE protocol on two hardware platforms: the SpinQ
  Triangulum 3-qubit NMR processor \cite{SpinQ2021} and the IBM Kingston
  127-qubit superconducting processor \cite{IBMQuantum2024}, for functions
  of degrees $d=0$, $d=1$, and $d=2$ at $n=2$.
  The experiments illustrate that the $\Gclass{2}{1}$ encoding
  (3 gates, $\mathcal{O}(n)$ depth) executes stably on both platforms,
  whereas the $\Gclass{2}{2}$ encoding (4 gates) exceeds the Triangulum
  hardware depth limit but executes successfully on Kingston.
  This shows that the $\Gclass{n}{d}$
  degree stratum acts as a practical hardware-feasibility criterion,
  consistent with the gate-count predictions of \cref{thm:circuit_depth}
  at $n=2$.
\end{enumerate}

\paragraph{Organisation.}
\Cref{sec:problem} formulates the numerical integration problem within
the QAE framework.
\Cref{subsec:qprob} introduces quantum probability on a finite index set
in standard linear-algebra notation, while
\Cref{subsec:oracle,subsec:mlae} define the amplitude oracle and the
MLAE estimator.
\Cref{sec:encoding} introduces the encoding map and its multilinear
structure.
\Cref{sec:hierarchy} develops the angle-structure hierarchy together
with the corresponding encoding-complexity theorem.
\Cref{sec:pareto} contains the main complexity and error-analysis
results, together with the construction decoupling encoding degree
from Sobolev regularity.
\Cref{sec:experiments} provides a computational illustration of the
hierarchy at $n=2$.
\Cref{sec:conclusions} concludes with open problems, including
multivariate extensions and a possible application to the Heston model.
\Cref{app:qc} presents the quantum-computing model in matrix language,
together with a dictionary relating it to Dirac notation.
\Cref{app:circuit_proof,app:walsh} collect the circuit implementation
details and Walsh--Hadamard background.

\section{Problem Formulation}
\label{sec:problem}

\subsection{The numerical integration problem}

Let $g:[0,1]\to[0,1]$ be a bounded measurable function.
We wish to compute the integral $I[g] := \int_0^1 g(x)\,dx$ to
precision~$\varepsilon > 0$.

Fix $n \geq 1$ and let $N := 2^n$.
The uniform $n$-qubit grid is
\begin{equation}
  \mathcal{X}_n := \bigl\{x_i = i/N : i \in \{0,1,\ldots,N-1\}\bigr\},
  \label{eq:grid}
\end{equation}
Let $p \geq 1$ be an integer and let $Q_n^{(p)}[g]$ denote the approximation
of $I[g]$ produced by a quadrature rule of order~$p$ on the grid
$\mathcal{X}_n$.
For $g \in C^p[0,1]$ the discretisation error satisfies
\begin{equation}
  E_n^{(p)}[g]
  := \bigl|I[g] - Q_n^{(p)}[g]\bigr|
  = \mathcal{O}(2^{-pn}),
  \label{eq:disc_error}
\end{equation}
and more generally, for $g \in C^\alpha[0,1]$ with $\alpha \geq p$,
the same bound $\mathcal{O}(2^{-pn})$ holds since the rule exploits
at most $p$ derivatives.
Explicit bounds for the three standard rules ($p=1,2,4$) are recalled in
\cref{sec:pareto}.

\subsection{Quantum probability on a finite index set}
\label{subsec:qprob}

We collect the probabilistic formalism used throughout the paper.
The presentation uses standard linear-algebra notation;
a dictionary with the Dirac bra-ket notation common in the physics
literature is given in \cref{app:dirac}.

\paragraph{State space.}
Fix $N = 2^n$ and let $V = \mathbb{C}^N$ be equipped with the standard
Hermitian inner product $\langle \mathbf{u}, \mathbf{v}\rangle
= \mathbf{u}^* \mathbf{v}$.
Let $\{\mathbf{u}_0, \mathbf{u}_1, \ldots, \mathbf{u}_{N-1}\}$ be the
standard orthonormal basis of~$V$, indexed by $\mathbb{Z}_N =
\{0,1,\ldots,N-1\}$.
A \emph{quantum state} (or \emph{state}) is a unit vector $\mathbf{v} \in V$,
$\|\mathbf{v}\| = 1$.

\paragraph{Quantum random variable and projectors.}
A \emph{quantum random variable} over $\mathbb{Z}_N$ associated to the
basis $\{\mathbf{u}_k\}$ is the family of rank-one orthogonal projectors
\begin{equation}
  \Pi_k := \mathbf{u}_k \mathbf{u}_k^* \in \mathbb{M}_N(\mathbb{C}),
  \qquad k \in \mathbb{Z}_N.
  \label{eq:projectors}
\end{equation}
These satisfy $\Pi_k \Pi_j = \delta_{kj}\Pi_k$ (orthogonality),
$\Pi_k^* = \Pi_k$ (self-adjointness), and
$\sum_{k=0}^{N-1}\Pi_k = I_N$ (completeness).
The family $\{\Pi_k\}_{k=0}^{N-1}$ is a \emph{projection-valued measure}
(PVM) on $\mathbb{Z}_N$.

\paragraph{Probability law and expectation.}
Given a state $\mathbf{v} \in V$, a measurement of the quantum random
variable $\{\Pi_k\}$ yields outcome $k \in \mathbb{Z}_N$ with
probability
\begin{equation}
  \mathbb{P}(k;\, \mathbf{v})
  := \mathbf{v}^* \Pi_k\, \mathbf{v}
   = |\langle \mathbf{u}_k, \mathbf{v}\rangle|^2.
  \label{eq:born_linalg}
\end{equation}
One verifies $\mathbb{P}(k;\,\mathbf{v}) \geq 0$ and
$\sum_k \mathbb{P}(k;\,\mathbf{v}) = \mathbf{v}^* (\sum_k \Pi_k)\mathbf{v}
= \|\mathbf{v}\|^2 = 1$.
The \emph{expectation} of a function $f:\mathbb{Z}_N\to\mathbb{R}$
in state $\mathbf{v}$ is
\begin{equation}
  \mathbb{E}_{\mathbf{v}}[f]
  := \sum_{k=0}^{N-1} f(k)\,\mathbb{P}(k;\,\mathbf{v})
   = \mathbf{v}^* \!\Bigl(\sum_{k=0}^{N-1} f(k)\,\Pi_k\Bigr)\mathbf{v}.
  \label{eq:expectation_linalg}
\end{equation}

\paragraph{Circuits as unitary maps.}
A \emph{quantum circuit} is a unitary $U \in \mathrm{U}(N)$ built as
a product of elementary gates (rotations and controlled operations;
see \cref{app:qc}).
Starting from the fixed initial state $\mathbf{e}_0 := \mathbf{u}_0
\in V$, the circuit produces the output state
$\mathbf{v}_U := U\mathbf{e}_0$,
and the measurement probability is
\begin{equation}
  \mathbb{P}(k;\, U)
  := \mathbb{P}(k;\, U\mathbf{e}_0)
   = (U\mathbf{e}_0)^*\Pi_k(U\mathbf{e}_0)
   = |\langle \mathbf{u}_k, U\mathbf{e}_0\rangle|^2.
  \label{eq:circuit_prob}
\end{equation}

\paragraph{Product spaces and marginal probabilities.}
The amplitude oracle acts on the \emph{product space}
$V \otimes \mathbb{C}^2 = \mathbb{C}^{2N}$,
with standard basis
$\{\mathbf{u}_i \otimes \mathbf{u}_b^{(2)} :
i \in \mathbb{Z}_N,\, b \in \{0,1\}\}$,
where $\{\mathbf{u}_0^{(2)}, \mathbf{u}_1^{(2)}\}$ is the standard
basis of the \emph{ancilla} factor $\mathbb{C}^2$.
The projector onto ancilla state $b=1$ is
\begin{equation}
  \Pi_1^{\mathrm{anc}}
  := I_N \otimes \mathbf{u}_1^{(2)}(\mathbf{u}_1^{(2)})^*
  \;\in\; \mathbb{M}_{2N}(\mathbb{C}),
  \label{eq:ancilla_projector}
\end{equation}
and the marginal probability of measuring the ancilla in state~$1$,
for a circuit $U \in \mathrm{U}(2N)$ starting from
$\mathbf{e}_0 \otimes \mathbf{u}_0^{(2)}$, is
\begin{equation}
  \mathbb{P}(\mathrm{anc}=1;\, U)
  := (U(\mathbf{e}_0\otimes\mathbf{u}_0^{(2)}))^*\,
     \Pi_1^{\mathrm{anc}}\,
     (U(\mathbf{e}_0\otimes\mathbf{u}_0^{(2)})).
  \label{eq:marginal_linalg}
\end{equation}
This is the central quantity estimated by QAE throughout the paper.

\begin{remark}[Connection to algebraic probability]
\label{rem:alg_prob}
The pair $(\mathbb{M}_{2N}(\mathbb{C}),\,\mathbf{v}\mapsto\mathbf{v}^*(\cdot)\mathbf{v})$
is an instance of the algebraic-probability framework of
\cite{FalcoMatthies2026}: the algebra is $\mathbb{M}_{2N}(\mathbb{C})$,
the state is the vector functional $\omega_\mathbf{v}(A)=\mathbf{v}^*A\mathbf{v}$,
and $\{\Pi_k\}$ is a maximal abelian family of projectors generating
the \emph{classical output context}.
All probabilistic identities used in this paper follow from the
algebraic structure alone, without reference to physical axioms.
\end{remark}

\subsection{The amplitude oracle}
\label{subsec:oracle}

The QAE approach encodes the left Riemann sum of~$g$ as the marginal
probability~\eqref{eq:marginal_linalg} of a specific circuit.
For $g:\mathcal{X}_n\to[0,1]$, define the \emph{encoding operator}
$G_g\in\mathrm{U}(2N)$ block-diagonally by its action on the basis of
$V\otimes\mathbb{C}^2$,
\begin{equation}
  G_g\,(\mathbf{u}_i\otimes\mathbf{u}_0^{(2)})
  = \mathbf{u}_i\otimes
    \bigl(\sqrt{1-g(x_i)}\,\mathbf{u}_0^{(2)}
         +\sqrt{g(x_i)}\,\mathbf{u}_1^{(2)}\bigr),
  \qquad i\in\{0,\ldots,N-1\},
  \label{eq:Gg_def}
\end{equation}
and let $H_n = H^{\otimes n}$ with
$H = \tfrac{1}{\sqrt{2}}
\bigl(\begin{smallmatrix}1&1\\1&-1\end{smallmatrix}\bigr)$,
so that $H_n\mathbf{e}_0 = N^{-1/2}\sum_{i=0}^{N-1}\mathbf{u}_i$.
The \emph{amplitude oracle} is
\begin{equation}
  A_g := G_g\,(H_n\otimes I_2) \;\in\; \mathrm{U}(2N),
  \label{eq:Ag_def}
\end{equation}
and a direct computation from \eqref{eq:marginal_linalg} shows that its
output state
$\mathbf{w}_g := A_g(\mathbf{e}_0\otimes\mathbf{u}_0^{(2)})$
satisfies
\begin{equation}
  \mathbb{P}(\mathrm{anc}=1;\, A_g)
  = \mathbf{w}_g^*\,\Pi_1^{\mathrm{anc}}\,\mathbf{w}_g
  = \frac{1}{N}\sum_{i=0}^{N-1} g(x_i)
  =: R_n^{\mathrm{left}}[g].
  \label{eq:oracle_prob}
\end{equation}
Thus $A_g$ acts on an index register ($V = \mathbb{C}^N$) and one
ancilla qubit ($\mathbb{C}^2$).
Its gate complexity is measured by the number of elementary
controlled-rotation factors in its decomposition.
The integer~$n$ (through $N=2^n$) controls the approximation quality,
while the structure of~$g$ controls the complexity of~$A_g$.

The \emph{Grover iterate} associated with $A_g$ is
\begin{equation}
  Q := A_g\,S_{\mathbf{e}_0}\,A_g^*\,S_{\mathbf{w}_0}
  \;\in\;\mathrm{U}(2N),
  \qquad S_{\mathbf{v}} := I - 2\mathbf{v}\mathbf{v}^*,
  \label{eq:grover_def}
\end{equation}
where $S_{\mathbf{v}}$ is the reflection through the hyperplane
orthogonal to~$\mathbf{v}$ and $\mathbf{w}_0$ is the projection of
$\mathbf{w}_g$ onto the ancilla-$0$ component.
Its key property is
\begin{equation}
  \mathbb{P}(\mathrm{anc}=1;\, Q^k A_g)
  = \sin^2\!\bigl((2k+1)\arcsin(\sqrt{a})\bigr),
  \qquad k \in \mathbb{N}_0,
  \label{eq:qae_prob}
\end{equation}
where $a = R_n^{\mathrm{left}}[g]$.
This oscillation in~$k$ encodes the left Riemann sum of~$g$ in the
frequency $\arcsin(\sqrt{a})/\pi$, and $M$ evaluations of the iterate
suffice to estimate~$a$ with mean-squared error $\mathcal{O}(M^{-1})$
\cite{BrassardHoyerMoscaTapp2002}.

\subsection{Maximum Likelihood Amplitude Estimation (MLAE)}
\label{subsec:mlae}

We use the circuit-efficient variant of \cite{SuzukiEtAl2019}, which
replaces quantum phase estimation by a classical maximum-likelihood step.

\begin{definition}[MLAE estimator]
\label{def:mlae}
Let $\mathcal{K} = \{k_0,\ldots,k_r\} \subset \mathbb{N}_0$ be a
finite schedule and $N_k \in \mathbb{N}$ shots per level.
For each $k\in\mathcal{K}$, run the circuit $Q^k A_g$ and record
$m_k$ outcomes with ancilla~$=1$ out of $N_k$ shots.
The \emph{log-likelihood function} is
\begin{equation}
  \ell(a) :=
  \sum_{k\in\mathcal{K}}
  \bigl[m_k \log p_k(a)
       + (N_k - m_k)\log(1-p_k(a))\bigr],
  \label{eq:loglik}
\end{equation}
where
\[
  p_k(a) := \sin^2\!\bigl((2k+1)\arcsin(\sqrt{a})\bigr),
\]
and the MLAE estimator is
$\hat{a} := \arg\max_{a\in[0,1]}\ell(a)$.
\end{definition}

The total oracle cost is $M := \sum_{k\in\mathcal{K}} N_k(2k+1)$.
For a linear schedule $k_j = j$ with equal shots $N_k = N_0$, the
MLAE estimator is asymptotically efficient: as the total shot count
grows, it converges to an unbiased estimator achieving the
Cram\'{e}r--Rao lower bound
\cite[Theorem~1 and Eq.~(13)]{SuzukiEtAl2019}, so that
\begin{equation}
  \mathbb{E}\bigl[(\hat{a}-a)^2\bigr]^{1/2}
  \;\leq\; \frac{C_{\mathrm{est}}}{M},
  \label{eq:mlae_mse}
\end{equation}
for a constant $C_{\mathrm{est}} > 0$ depending only on the schedule
and on $a(1-a)$.
By Chebyshev's inequality, for any $\delta\in(0,1)$,
\begin{equation}
  \mathbb{P}\!\Bigl(|\hat{a}-a| \leq \frac{C_{\mathrm{est}}}{M\sqrt{\delta}}\Bigr)
  \;\geq\; 1 - \delta.
  \label{eq:mlae_hp}
\end{equation}
In particular, taking $\delta$ a fixed constant (e.g.\ $\delta = 1/4$)
gives $|\hat{a}-a| \leq 2C_{\mathrm{est}}/M$ with probability at least
$3/4$, so $\varepsilon$-accuracy with constant success probability
requires $M = \mathcal{O}(1/\varepsilon)$ oracle calls, independent of
the regularity of~$g$.

The statistical precision of $\hat{a}$ is bounded below by the
Cram\'{e}r--Rao bound $\mathrm{Var}(\hat{a})\geq 1/\mathcal{I}$,
where the total Fisher information is
\begin{equation}
  \mathcal{I} := \sum_{k\in\mathcal{K}} N_k\,I_k(a),
  \qquad
  I_k(a) :=
  \frac{\bigl[p_k'(a)\bigr]^2}{p_k(a)\bigl(1-p_k(a)\bigr)},
  \label{eq:fisher_info}
\end{equation}
and $p_k'(a) = dp_k/da$.
A direct computation gives
\begin{equation}
  I_k(a) = \frac{(2k+1)^2}{a(1-a)},
  \qquad p_k(a)\notin\{0,1\},
  \label{eq:fisher_explicit}
\end{equation}
which simplifies at $a=\tfrac{1}{2}$ to
$I_k\!\bigl(\tfrac{1}{2}\bigr) = 4(2k+1)^2$.
The expression is undefined whenever $p_k(a)\in\{0,1\}$
(i.e.\ when $(2k+1)\arcsin(\sqrt{a})\in\tfrac{\pi}{2}\mathbb{Z}$);
in practice the schedule $\mathcal{K}$ is chosen to avoid these
degenerate amplitudes.
The following proposition makes this criterion explicit.

\begin{proposition}[Admissible MLAE schedule]
\label{prop:admissible_schedule}
Fix $a\in(0,1)$.
A level $k\in\mathbb{N}_0$ is called \emph{degenerate for~$a$} if
\begin{equation}
  (2k+1)\arcsin(\!\sqrt{a})\;\in\;\tfrac{\pi}{2}\,\mathbb{Z},
  \label{eq:degenerate_condition}
\end{equation}
equivalently, if $p_k(a)\in\{0,1\}$.
A schedule $\mathcal{K}\subset\mathbb{N}_0$ is \emph{admissible for~$a$}
if it contains no degenerate level.
For any admissible schedule, $I_k(a)$ is finite and positive for every
$k\in\mathcal{K}$, so the Fisher information~\eqref{eq:fisher_info}
is well-defined.

For fixed $a$, the set of degenerate levels is
$\bigl\{k\in\mathbb{N}_0 : (2k+1) \in \frac{\pi/2}{\arcsin(\sqrt{a})}\mathbb{Z}\bigr\}$,
which is finite or empty for all but the countable set of amplitudes
$a = \sin^2\!\bigl(\frac{j\pi}{2(2k+1)}\bigr)$, $j,k\in\mathbb{N}$.
\end{proposition}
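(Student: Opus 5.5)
The plan is to treat the three assertions of \cref{prop:admissible_schedule} separately: the equivalence of the two degeneracy conditions, finiteness and positivity of $I_k(a)$ on admissible levels, and the description of the degenerate set. Throughout, write $\theta:=\arcsin(\sqrt a)$ and $\varphi_k:=(2k+1)\theta$, so that $p_k(a)=\sin^2\varphi_k$. Since $a\in(0,1)$, we have $\theta\in(0,\pi/2)$.

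\emph{Equivalence.} We have $\sin^2\varphi_k=0$ iff $\varphi_k\in\pi\mathbb{Z}$, and $\sin^2\varphi_k=1$ iff $\varphi_k\in\tfrac{\pi}{2}+\pi\mathbb{Z}$. The union of these two sets is exactly $\tfrac{\pi}{2}\mathbb{Z}$, which is \eqref{eq:degenerate_condition}.

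\emph{Fisher information.} I would verify \eqref{eq:fisher_explicit} by a short chain-rule computation rather than quoting it. From $d\theta/da=1/(2\sqrt{a(1-a)})$ we get
\[
p_k'(a)=2\sin\varphi_k\cos\varphi_k\,(2k+1)\,\frac{d\theta}{da}=\frac{(2k+1)\sin(2\varphi_k)}{2\sqrt{a(1-a)}},
\]
and also
\[
p_k(1-p_k)=\sin^2\varphi_k\cos^2\varphi_k=\tfrac14\sin^2(2\varphi_k).
\]
For an admissible level, $\sin(2\varphi_k)\neq0$ because $2\varphi_k\notin\pi\mathbb{Z}$. The factor $\sin^2(2\varphi_k)$ therefore cancels, giving $I_k(a)=(2k+1)^2/(a(1-a))$. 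This value is finite and strictly positive because $0<a<1$. Since $\mathcal{K}$ is finite, $\mathcal{I}$ is a finite sum of finite positive terms and is well defined.

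\emph{Degenerate set.} Level $k$ is degenerate iff $(2k+1)\theta=j\pi/2$ for some integer $j$, which is the stated membership $(2k+1)\in\frac{\pi/2}{\theta}\mathbb{Z}$. Because $\varphi_k>0$, necessarily $j\ge1$. Rearranging gives $\theta=j\pi/(2(2k+1))$, i.e.\ $a=\sin^2\bigl(j\pi/(2(2k+1))\bigr)$. So if $a$ lies outside the countable set $\{\sin^2(j\pi/(2(2k+1))): j,k\in\mathbb{N}\}$, no level is degenerate. For all such $a$ the degenerate set is empty, which in particular gives ``finite or empty''.

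\emph{Main obstacle.} The delicate point is the behaviour on the exceptional countable set, where the wording ``finite'' needs care. Suppose $2\theta/\pi=r/q$ in lowest terms. Then $k$ is degenerate iff $q\mid(2k+1)r$, i.e.\ iff $q\mid 2k+1$. This condition never holds when $q$ is even, and holds along an infinite arithmetic progression when $q$ is odd. The proof will therefore establish the sharper statement (empty off the exceptional set), which implies the claim as stated. It will also record this arithmetic-progression structure, since it shows that on the exceptional set degenerate levels can be avoided by a finite schedule but not by every linear schedule $k_j=j$.
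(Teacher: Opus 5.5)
Your proposal is correct. The paper states this proposition without a proof, relying only on the remark that ``a direct computation gives'' \eqref{eq:fisher_explicit}. Your chain-rule computation makes that verification explicit: it uses $p_k(1-p_k)=\tfrac14\sin^2(2\varphi_k)$, and the factor $\sin^2(2\varphi_k)$ is nonzero, and so cancels, exactly at admissible levels. The equivalence and the description of the degenerate set are the elementary trigonometry the paper leaves implicit.

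Your sharpening of the last claim is worth keeping, and it can be stated without reduced fractions. If level $k_0$ is degenerate, then so is every $k$ with $2k+1=(2m+1)(2k_0+1)$, $m\in\mathbb{N}_0$, because $(2k+1)\theta=(2m+1)\cdot(2k_0+1)\theta\in\tfrac{\pi}{2}\mathbb{Z}$. So a nonempty degenerate set is automatically infinite. Conversely, on the exceptional set the degenerate set is nonempty: if $a=\sin^2\bigl(j\pi/(2(2k_0+1))\bigr)$, then $\theta\equiv\pm j\pi/(2(2k_0+1))\pmod{\pi}$, so level $k_0$ itself is degenerate. Hence the degenerate set is empty off the exceptional set and infinite on it, and the ``finite'' alternative in the statement never occurs. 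Your ``$q$ even'' branch arises only off the exceptional set (e.g.\ $a=\tfrac12$).

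One small tidy-up: the statement indexes the exceptional set by $k\in\mathbb{N}$, while your derivation allows $k\in\mathbb{N}_0$. Add the remark that level $0$ is never degenerate, since $\theta=j\pi/2$ with $j\geq 1$ contradicts $\theta\in(0,\pi/2)$.
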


\begin{example}[Degeneracy of $g_0$ at $k=1$]
\label{ex:degeneracy_g0}
For $g_0(x)=\tfrac{1}{4}$, the encoded amplitude is $a=\tfrac{1}{4}$,
so $\arcsin(\!\sqrt{a})=\arcsin(\tfrac{1}{2})=\tfrac{\pi}{6}$.
At level $k=1$: $(2\cdot 1+1)\cdot\tfrac{\pi}{6}=\tfrac{\pi}{2}\in\tfrac{\pi}{2}\mathbb{Z}$,
hence $p_1(\tfrac{1}{4})=\sin^2(\tfrac{\pi}{2})=1$ and $k=1$ is
degenerate.
The admissible schedule $\mathcal{K}=\{0,2\}$ bypasses this level;
$k=0$ and $k=2$ satisfy $p_0(\tfrac{1}{4})=\tfrac{1}{4}$
and $p_2(\tfrac{1}{4})=\sin^2(\tfrac{5\pi}{6})=\tfrac{1}{4}$,
both strictly between $0$ and~$1$.
This is the schedule used for $g_0$ in \cref{sec:experiments}.
\end{example}

\begin{remark}[The two independent error sources]
\label{rem:two_errors}
Inserting the intermediate quantity $Q_n^{(p)}[g]$
and applying the triangle inequality gives
\begin{equation*}
  \bigl|I[g] - \hat{a}\bigr|
  \leq \underbrace{\bigl|I[g] - Q_n^{(p)}[g]\bigr|}_{
         =\,E_n^{(p)}[g],\;\text{discretisation}}
     + \underbrace{\bigl|\hat{a} - Q_n^{(p)}[g]\bigr|}_{
         \text{estimation}}.
\end{equation*}
The two sources are independent: the discretisation error
$E_n^{(p)}[g]$ depends on the regularity of~$g$ and the order
of the quadrature rule, and decreases with~$n$; the estimation
error is controlled by MLAE~\eqref{eq:mlae_mse} and decreases
with~$M$ (with high probability via~\eqref{eq:mlae_hp}).
The gate complexity of~$A_g$ — the cost of a single oracle
call — enters neither bound directly, yet it determines the
practical feasibility of the method.
The central question of this paper is therefore: \emph{for
which functions~$g$ is the projector-expectation
formula~\eqref{eq:oracle_prob} cheap to implement as a
circuit?}
\end{remark}

\section{The Encoding Map}
\label{sec:encoding}

\subsection{Grid and bit-string notation}

Fix $n \geq 1$ and let $\mathbb{B} := \{0,1\}$, so
$\mathbb{B}^n = \{0,1\}^n$ is the set of $n$-bit strings.
For $b = (b_0,\ldots,b_{n-1}) \in \mathbb{B}^n$ define the grid index
\begin{equation}
  i(b) = \sum_{k=0}^{n-1} b_k\, 2^k \in \{0,\ldots,2^n-1\},
  \label{eq:index_map}
\end{equation}
and the grid point $x_{i(b)} = i(b)/2^n \in \mathcal{X}_n$.
Write $[n] := \{0,1,\ldots,n-1\}$ for the set of bit-position indices.

\begin{remark}[Two index sets]
Two index sets appear throughout: bit positions $[n]$ (labelling
coordinates of~$b$ and subsets in multilinear expansions) and grid
indices $\{0,\ldots,2^n-1\}$ (labelling grid points).
The bijection $i:\mathbb{B}^n\to\{0,\ldots,2^n-1\}$ in~\eqref{eq:index_map}
converts between them.
\end{remark}

\subsection{The angle map}

The amplitude oracle $A_g$ is built from an operator $G_g$ whose action
on the ancilla qubit at each grid point $x_{i(b)}$ is a rotation by
angle $\Thetag(b)$.
The precise construction is given in \cref{app:qc}; what matters here is
the following.

\begin{definition}[Angle map]
\label{def:angle_map}
For $g:\mathcal{X}_n\to[0,1]$, the \emph{angle map} is the function
\begin{equation}
  \Thetag:\mathbb{B}^n\to[0,\pi],
  \qquad
  \Thetag(b) = 2\arcsin\!\bigl(\sqrt{g(x_{i(b)})}\bigr).
  \label{eq:angle_map}
\end{equation}
\end{definition}

The map $\phi(t) = 2\arcsin(\sqrt{t})$ is a bijection $[0,1]\to[0,\pi]$
with inverse $\phi^{-1}(\theta) = \sin^2(\theta/2)$.
Since the reindexing $b\mapsto x_{i(b)}$ is also a bijection, the angle
map $g\mapsto\Thetag$ is a bijection from $[0,1]^{\mathcal{X}_n}$ to
$[0,\pi]^{\mathbb{B}^n}$.
In particular, \emph{$g$ and $\Thetag$ carry identical information}.
The significance of this observation is the following key fact.

\begin{proposition}[Encoding cost equals angle-map complexity]
\label{prop:encoding_cost}
The encoding operator $G_g \in \mathrm{U}(2N)$ is block-diagonal with
respect to the decomposition $\mathbb{C}^{2N} = \bigoplus_{i=0}^{N-1}
\mathrm{span}\{\mathbf{u}_i\} \otimes \mathbb{C}^2$,
and the block at index~$i$ is the $2\times 2$ rotation matrix
\begin{equation}
  R_Y(\Thetag(b(i)))
  = \begin{pmatrix}
      \sqrt{1-g(x_i)} & -\sqrt{g(x_i)} \\
      \sqrt{g(x_i)}   & \sqrt{1-g(x_i)}
    \end{pmatrix},
  \label{eq:G_block}
\end{equation}
where $b(i)$ is the unique bit-string with $i(b(i))=i$.
Consequently, an efficient implementation of~$G_g$ is equivalent to
an efficient encoding of the map $b\mapsto\Thetag(b)$.
In particular, any upper bound on the circuit complexity of encoding
$\Thetag:\mathbb{B}^n\to[0,\pi]$ translates directly into an upper
bound on the gate complexity of~$A_g$; the algebraic structure of
$\Thetag$ therefore governs the encoding cost.
\end{proposition}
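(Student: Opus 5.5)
The plan is to read off the block structure directly from the defining relation \eqref{eq:Gg_def}, pin down the $2\times2$ block via the rotation convention, and then argue the cost equivalence by a two-way reduction. First, from \eqref{eq:Gg_def} we have $G_g(\mathbf{u}_i\otimes\mathbf{u}_0^{(2)})\in\mathrm{span}\{\mathbf{u}_i\}\otimes\mathbb{C}^2$ for every $i$. Since \eqref{eq:Gg_def} only prescribes the image of $\mathbf{u}_i\otimes\mathbf{u}_0^{(2)}$, I will complete the definition in the canonical way. On each subspace $\mathrm{span}\{\mathbf{u}_i\}\otimes\mathbb{C}^2$ the image of $\mathbf{u}_i\otimes\mathbf{u}_1^{(2)}$ is taken to be $\mathbf{u}_i\otimes(-\sqrt{g(x_i)}\,\mathbf{u}_0^{(2)}+\sqrt{1-g(x_i)}\,\mathbf{u}_1^{(2)})$, which is the unique real orthogonal completion with determinant one. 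This choice is consistent with the construction deferred to \cref{app:qc}. With it, each subspace is invariant, so $G_g=\sum_i \Pi_i\otimes B_i$ is block-diagonal, and unitarity follows because each $B_i$ is orthogonal.

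Next I identify $B_i$ with $R_Y(\Thetag(b(i)))$. Use the convention $R_Y(\theta)=\bigl(\begin{smallmatrix}\cos(\theta/2)&-\sin(\theta/2)\\ \sin(\theta/2)&\cos(\theta/2)\end{smallmatrix}\bigr)$. With $\theta=\Thetag(b(i))=2\arcsin\sqrt{g(x_i)}$ we get $\sin(\theta/2)=\sqrt{g(x_i)}$. Since $\theta/2\in[0,\pi/2]$, the cosine is nonnegative, so $\cos(\theta/2)=\sqrt{1-g(x_i)}$. This yields \eqref{eq:G_block}. The bijection $i\leftrightarrow b(i)$ from \eqref{eq:index_map} justifies the reindexing.

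For the ``consequently'' part I argue both directions. In one direction, $G_g=\sum_{b}\Pi_{i(b)}\otimes R_Y(\Thetag(b))$ is exactly the uniformly controlled rotation determined by the function $b\mapsto\Thetag(b)$. Hence any circuit realising the controlled rotation by $\Thetag(b)$ on the ancilla, conditioned on the index register being in state $b$, is a circuit for $G_g$. Multiplying by the $n$ Hadamard gates of $H_n\otimes I_2$ in \eqref{eq:Ag_def} gives a circuit for $A_g$, so the gate complexity of $A_g$ is bounded by that of encoding $\Thetag$ plus $n$. In the other direction, a circuit for $G_g$ determines every block, and therefore $\Thetag$, because $\phi$ is a bijection. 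So no cheaper encoding of $g$ can bypass the structure of $\Thetag$, and the two complexities agree up to the additive $O(n)$ Hadamard layer.

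The only delicate step is the phase convention of the second column, which \eqref{eq:Gg_def} leaves unspecified. I would fix it explicitly as above, noting that any other completion differs by a diagonal phase on $\mathbf{u}_1^{(2)}$ that does not affect \eqref{eq:oracle_prob}.
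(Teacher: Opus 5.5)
Your proposal is correct and follows the paper's argument. Both read off invariance of each fibre $\mathrm{span}\{\mathbf{u}_i\}\otimes\mathbb{C}^2$ from \eqref{eq:Gg_def}, identify the block with $R_Y(\Thetag(b(i)))$ via the half-angle relations, and note that the $g$-independent layer $H_n\otimes I_2$ adds only $n$ gates. Your determinant-one choice for the second column, which \eqref{eq:Gg_def} leaves unspecified, and your check that $\theta/2\in[0,\pi/2]$ forces $\cos(\theta/2)=\sqrt{1-g(x_i)}$ make precise two points the paper's proof takes for granted.
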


\begin{proof}
By \eqref{eq:Gg_def}, $G_g$ preserves each summand
$\mathrm{span}\{\mathbf{u}_i\}\otimes\mathbb{C}^2$, so it is
block-diagonal with a $2\times 2$ block at each index~$i$.
Writing $\theta_i := \Thetag(b(i)) = 2\arcsin(\sqrt{g(x_i)})$, the
half-angle identities give
$\cos(\theta_i/2) = \sqrt{1-g(x_i)}$ and
$\sin(\theta_i/2) = \sqrt{g(x_i)}$, so that block is exactly
$R_Y(\theta_i)$, which is~\eqref{eq:G_block}.
Since $A_g = G_g(H_n\otimes I_2)$ by~\eqref{eq:Ag_def} and
$H_n\otimes I_2$ costs $n$ gates independently of~$g$, every
factorisation of $b\mapsto\Thetag(b)$ into controlled rotations yields
a factorisation of $G_g$, and hence of $A_g$, with the same number of
non-trivial factors.
\end{proof}

\Cref{prop:encoding_cost} shifts the problem: instead of asking "which
functions admit low-complexity amplitude encodings?", we ask
the purely combinatorial question "which functions $\Thetag:\mathbb{B}^n\to[0,\pi]$
have low complexity as Boolean functions?".
The answer is given by the multilinear degree theory developed in the
next section.

\subsection{Multilinear structure of the angle map}
\label{subsec:multilinear}

Any function $f:\mathbb{B}^n\to\R$ has a unique representation as a
\emph{multilinear polynomial}:
\begin{equation}
  f(b) = \sum_{S\subseteq[n]} \hat{f}_S \prod_{j\in S} b_j
       = \sum_{S\subseteq[n]} \hat{f}_S\,\chimon{S}(b),
  \label{eq:fourier_expansion}
\end{equation}
where $\chimon{S}(b) := \prod_{j\in S} b_j \in \{0,1\}$ is the
\emph{monomial indicator} of $S\subseteq[n]$, and the coefficients
$\hat{f}_S \in \R$ are the \emph{multilinear coefficients} of~$f$.
The \emph{multilinear degree} is
$\deg(f) = \max\{|S| : \hat{f}_S \neq 0\}$.

The coefficients are recovered by Möbius inversion on the Boolean lattice.

\begin{proposition}[Möbius inversion]
\label{prop:mobius}
For every $f:\mathbb{B}^n\to\R$ and $S\subseteq[n]$,
\begin{equation}
  \hat{f}_S
  = \sum_{T\subseteq S}(-1)^{|S\setminus T|}\,f(\indvec{T})
  = \sum_{b\leq S}(-1)^{|S|-|\mathrm{supp}(b)|}\,f(b),
  \label{eq:mobius}
\end{equation}
where $\indvec{T}\in\mathbb{B}^n$ has $(\indvec{T})_j = \Iver{j\in T}$,
$\mathrm{supp}(b) := \{j\in[n] : b_j = 1\}$ is the support of~$b$,
and $b\leq S$ means $\mathrm{supp}(b)\subseteq S$.
\end{proposition}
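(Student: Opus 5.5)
The plan is to evaluate the expansion \eqref{eq:fourier_expansion} at the indicator vectors $\indvec{T}$, which reduces the claim to classical Möbius inversion on the Boolean lattice $2^{[n]}$.

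First, note that $\chimon{U}(\indvec{T}) = \prod_{j\in U}\Iver{j\in T} = \Iver{U\subseteq T}$. Substituting this into \eqref{eq:fourier_expansion} gives the zeta-transform identity
\[
  f(\indvec{T}) = \sum_{U\subseteq T}\hat f_U \qquad\text{for every } T\subseteq[n].
\]
Next, I will substitute this identity into the right-hand side of \eqref{eq:mobius} and exchange the order of summation:
\[
  \sum_{T\subseteq S}(-1)^{|S\setminus T|}\sum_{U\subseteq T}\hat f_U
  = \sum_{U\subseteq S}\hat f_U\sum_{U\subseteq T\subseteq S}(-1)^{|S\setminus T|}.
\]
The inner sum runs over subsets $T\setminus U$ of $S\setminus U$. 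By the binomial theorem it equals $(1-1)^{|S\setminus U|}$, which is $1$ if $U=S$ and $0$ otherwise. Only the term $U=S$ survives, leaving $\hat f_S$, which proves the first equality.

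The second equality is a reindexing. The map $T\mapsto\indvec{T}$ is a bijection from subsets of $S$ onto $\{b : \mathrm{supp}(b)\subseteq S\}$, with $\mathrm{supp}(\indvec{T})=T$. Therefore $|S\setminus T| = |S|-|\mathrm{supp}(b)|$.

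The proof takes the existence and uniqueness of the multilinear representation as given, as the paper states just before the proposition; that is the only logical prerequisite. If needed, it can be justified independently: the $2^n$ monomials $\chimon{U}$ are linearly independent, because the matrix $\bigl(\Iver{U\subseteq T}\bigr)_{T,U}$ is unitriangular under any linear extension of inclusion, and $\dim\R^{\mathbb{B}^n}=2^n$. The only genuinely nontrivial step is the alternating-sum cancellation $\sum_{U\subseteq T\subseteq S}(-1)^{|S\setminus T|}=\Iver{U=S}$, and the binomial theorem handles it directly, so I expect no real difficulty.
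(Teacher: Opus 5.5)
Your proof is correct and follows the paper's argument: evaluate the multilinear expansion at $\indvec{T}$ using $\chimon{U}(\indvec{T})=\Iver{U\subseteq T}$, swap the sums, and let $(1-1)^{|S\setminus U|}$ eliminate every term except $U=S$. Beyond the paper's one-line proof, you also spell out the reindexing $T\mapsto\indvec{T}$ behind the second equality and a unitriangularity argument for uniqueness of the expansion, both of which the paper leaves implicit.
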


\begin{proof}
Substituting \eqref{eq:fourier_expansion} into the right side of
\eqref{eq:mobius} and using $\chimon{R}(\indvec{T}) = \Iver{R\subseteq T}$,
the inner sum evaluates to $\sum_{V\subseteq S\setminus R}(-1)^{|V|}
= (1-1)^{|S\setminus R|} = \Iver{S=R}$, which collapses the double
sum to $\hat{f}_S$.
\end{proof}

\begin{corollary}[Degree characterisation]
\label{cor:degree}
For $f:\mathbb{B}^n\to\R$, $\deg(f)\leq d$ if and only if
\begin{equation}
  \sum_{b\leq S}(-1)^{|S|-|\mathrm{supp}(b)|}\,f(b) = 0
  \qquad\text{for all }S\subseteq[n]\text{ with }|S|>d.
  \label{eq:degree_condition}
\end{equation}
\end{corollary}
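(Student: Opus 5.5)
The plan is to read \cref{cor:degree} off directly from the uniqueness of the multilinear representation~\eqref{eq:fourier_expansion} combined with the Möbius inversion formula of \cref{prop:mobius}.

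By definition, $\deg(f)\le d$ holds exactly when $\hat f_S=0$ for every $S\subseteq[n]$ with $|S|>d$. For this to be a genuine characterisation, the coefficients $\hat f_S$ must be uniquely determined by $f$. That uniqueness is already built into the statement of~\eqref{eq:fourier_expansion}, and it can be confirmed by a dimension count: the $2^n$ monomial indicators $\chimon{S}$ span the $2^n$-dimensional space $\R^{\mathbb{B}^n}$. They do so because the point indicator $\Iver{b=\indvec{T}}$ can be written as $\prod_{j\in T}b_j\prod_{j\notin T}(1-b_j)$, and expanding this product gives a combination of the $\chimon{S}$. Since $2^n$ spanning vectors in a $2^n$-dimensional space are linearly independent, the $\chimon{S}$ form a basis.

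Next, \cref{prop:mobius} supplies, for each $S$, the explicit formula
\[
\hat f_S=\sum_{b\le S}(-1)^{|S|-|\mathrm{supp}(b)|}f(b).
\]
Substituting this expression into the condition ``$\hat f_S=0$ for all $|S|>d$'' yields exactly~\eqref{eq:degree_condition}. The two implications are therefore the same chain of equivalences read in either direction. If $\deg f\le d$, then each $\hat f_S$ with $|S|>d$ vanishes, and hence so does the corresponding alternating sum. Conversely, if all those alternating sums vanish, then all $\hat f_S$ with $|S|>d$ are zero, so the maximum in the definition of $\deg(f)$ ranges only over sets with $|S|\le d$. When every coefficient is zero, $f\equiv0$ and the degree is conventionally $-\infty$ or $0$, which is still $\le d$.

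I do not expect a real obstacle here. The only point that needs explicit mention is the uniqueness of the coefficients, since without it ``degree'' would not be well defined and the equivalence could fail. Uniqueness follows from the basis argument above. Alternatively, one can note that the inversion formula in \cref{prop:mobius} shows directly that the coefficients are determined by the values of $f$.
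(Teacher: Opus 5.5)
Your proposal is correct and matches the paper, which treats this corollary as an immediate consequence of the Möbius inversion formula in \cref{prop:mobius} combined with the definition $\deg(f)=\max\{|S|:\hat f_S\neq 0\}$. Your spanning-set argument for uniqueness of the coefficients, and your remark on $f\equiv 0$, are harmless additions; the paper simply takes uniqueness as given when it states~\eqref{eq:fourier_expansion}.
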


\begin{remark}[Computational cost]
\label{rem:wht}
The full coefficient table $\{\hat{f}_S\}_{S\subseteq[n]}$ can be
computed from $\{f(b)\}_{b\in\mathbb{B}^n}$ in $\mathcal{O}(n\,2^n)$
time via the Walsh--Hadamard transform (WHT), a fast algorithm for
Möbius inversion on $2^{[n]}$; see \cref{app:walsh} and
\cite{O'Donnell2014}.
For moderate $n$ (say $n\leq 20$), this is a practical classical
preprocessing step.
\end{remark}

The degree of $\Thetag$ as a multilinear polynomial over $\mathbb{B}^n$
is therefore a classically checkable invariant that, by
\cref{prop:encoding_cost}, controls the gate complexity of the quantum
encoding.
The next section builds the theory around this observation.

\section{Angle-Structure Hierarchy and Circuit Complexity}
\label{sec:hierarchy}

\subsection{Definition and basic structure}

\begin{definition}[Angle-structure class]
\label{def:Gnd}
For $n\geq 1$ and $0\leq d\leq n$, the \emph{angle-structure class of
degree~$d$} is
\begin{equation}
  \Gclass{n}{d} :=
  \bigl\{g:\mathcal{X}_n\to[0,1] : \deg(\Thetag)\leq d\bigr\},
  \label{eq:Gnd_def}
\end{equation}
where $\deg(\Thetag)$ is the multilinear degree of
$\Thetag:\mathbb{B}^n\to[0,\pi]$ defined in \cref{def:angle_map}.
\end{definition}

The filtration
$\Gclass{n}{0}\subseteq\Gclass{n}{1}\subseteq\cdots\subseteq\Gclass{n}{n}$
is strict: $\Gclass{n}{0}$ consists of constant functions; $\Gclass{n}{n}$
contains every $g:\mathcal{X}_n\to[0,1]$ (since any function on
$\mathbb{B}^n$ has a multilinear representation of degree $\leq n$);
and for each $1\leq d\leq n$ the monomial function
$\Thetag(b) = \pi\,\chimon{S}(b)$ for any $|S|=d$ witnesses
$\Gclass{n}{d}\setminus\Gclass{n}{d-1}\neq\emptyset$.

Membership in $\Gclass{n}{d}$ is equivalent (by \cref{cor:degree}) to
the finite system of linear constraints
\begin{equation}
  \sum_{b\leq S}(-1)^{|S|-|\mathrm{supp}(b)|}\,\Thetag(b) = 0
  \qquad\text{for all }S\subseteq[n]\text{ with }|S|>d.
  \label{eq:membership_conditions}
\end{equation}
These conditions involve only the $2^n$ values of the angle table
$\{\Thetag(b)\}_{b\in\mathbb{B}^n}$, which are computable from the
function values $\{g(x_i)\}$ via \eqref{eq:angle_map}.

\subsection{Worked examples}

\begin{example}[$g(x)=\sin^2(\tfrac{\pi x}{2})\in\Gclass{2}{1}$]
\label{ex:sin2}
Let $n=2$, grid $\mathcal{X}_2=\{0,\tfrac{1}{4},\tfrac{1}{2},\tfrac{3}{4}\}$,
index map $i(b_0,b_1)=b_0+2b_1$.
For $g(x)=\sin^2(\tfrac{\pi x}{2})$, using
$\Thetag(b)=2\arcsin(\sin(\tfrac{\pi x_{i(b)}}{2}))=\pi x_{i(b)}$:
\begin{equation*}
  \Thetag(0,0)=0,\quad
  \Thetag(1,0)=\tfrac{\pi}{4},\quad
  \Thetag(0,1)=\tfrac{\pi}{2},\quad
  \Thetag(1,1)=\tfrac{3\pi}{4}.
\end{equation*}
Applying Möbius inversion \eqref{eq:mobius}:
\begin{equation*}
  \wh{\Theta}{\emptyset}=0,\quad
  \wh{\Theta}{\{0\}}=\tfrac{\pi}{4},\quad
  \wh{\Theta}{\{1\}}=\tfrac{\pi}{2},\quad
  \wh{\Theta}{\{0,1\}}=\tfrac{3\pi}{4}-\tfrac{\pi}{2}-\tfrac{\pi}{4}+0=0.
\end{equation*}
Since $\wh{\Theta}{\{0,1\}}=0$, the angle map is exactly affine:
$\Thetag(b_0,b_1)=\tfrac{\pi}{4}b_0+\tfrac{\pi}{2}b_1$, confirming
$g\in\Gclass{2}{1}$.
\end{example}

\begin{example}[$g(x)=\sin^2(\pi x)\in\Gclass{2}{2}\setminus\Gclass{2}{1}$]
\label{ex:sin2_deg2}
Same grid. Function values:
$g(0)=0$, $g(\tfrac{1}{4})=\tfrac{1}{2}$,
$g(\tfrac{1}{2})=1$, $g(\tfrac{3}{4})=\tfrac{1}{2}$.
Angle values: $\Thetag(0,0)=0$, $\Thetag(1,0)=\tfrac{\pi}{2}$,
$\Thetag(0,1)=\pi$, $\Thetag(1,1)=\tfrac{\pi}{2}$.
The degree-2 Möbius coefficient is
$\wh{\Theta}{\{0,1\}}=\tfrac{\pi}{2}-\pi-\tfrac{\pi}{2}+0=-\pi\neq 0$,
so $\deg(\Thetag)=2$ and
$\Thetag(b_0,b_1)=\tfrac{\pi}{2}b_0+\pi b_1-\pi b_0 b_1$.
This witnesses $g\in\Gclass{2}{2}\setminus\Gclass{2}{1}$.
\end{example}

\subsection{Characterisation of \texorpdfstring{$\Gclass{n}{1}$}{G\{n\}\{1\}}}

The affine class $\Gclass{n}{1}$ is the most relevant for practice,
as it yields the shallowest circuits.

\begin{theorem}[Characterisation of $\Gclass{n}{1}$]
\label{thm:affine_char}
A function $g:\mathcal{X}_n\to[0,1]$ belongs to $\Gclass{n}{1}$ if and
only if the angle values $\theta_b:=\Thetag(b)$, $b\in\mathbb{B}^n$,
satisfy the \emph{affine consistency conditions}:
\begin{equation}
  \sum_{b\in\mathbb{B}^n}(-1)^{|S\cap\mathrm{supp}(b)|}\,\theta_b = 0
  \qquad\text{for all }S\subseteq[n]\text{ with }|S|\geq 2,
  \label{eq:affine_conditions}
\end{equation}
equivalently, $\wh{\Theta}{S}=0$ for all $|S|\geq 2$.
\end{theorem}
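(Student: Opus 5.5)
By \cref{def:Gnd}, $g\in\Gclass{n}{1}$ means $\deg(\Thetag)\le 1$. By \cref{cor:degree} this is equivalent to $\wh{\Theta}{S}=0$ for all $|S|\ge 2$, which settles the "equivalently" clause. The remaining task is to show that the sign-weighted sums in \eqref{eq:affine_conditions}, which are Walsh (character) sums rather than the Möbius sums of \eqref{eq:mobius}, vanish for all $|S|\ge 2$ exactly when the Möbius coefficients do. My plan is to show that both conditions say that $\Thetag$ lies in one and the same linear subspace of $\R^{\mathbb{B}^n}$.

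First step: introduce the characters $w_T(b):=(-1)^{|T\cap\mathrm{supp}(b)|}=\prod_{j\in T}(1-2b_j)$ for $T\subseteq[n]$. Expanding the product gives
\[
w_T=\sum_{R\subseteq T}(-2)^{|R|}\chimon{R}.
\]
Inverting via $b_j=(1-(-1)^{b_j})/2$ gives
\[
\chimon{S}=2^{-|S|}\sum_{T\subseteq S}(-1)^{|T|}w_T.
\]
Both changes of basis are triangular with respect to inclusion. Hence for every $d$,
\[
\mathcal{P}_d:=\mathrm{span}\{\chimon{R}:|R|\le d\}=\mathrm{span}\{w_T:|T|\le d\}.
\]
By uniqueness of the expansion \eqref{eq:fourier_expansion}, $\deg(f)\le d$ if and only if $f\in\mathcal{P}_d$.

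Second step: verify orthogonality, $\sum_{b}w_S(b)w_T(b)=\sum_b w_{S\triangle T}(b)=2^n\delta_{ST}$. This holds because for nonempty $U$ the sum $\sum_b w_U(b)$ factors over coordinates and each factor $j\in U$ contributes $1+(-1)=0$. So $\{w_T\}_{T\subseteq[n]}$ is an orthogonal basis of $\R^{\mathbb{B}^n}$. Consequently $\mathcal{P}_1=\mathrm{span}\{w_T:|T|\le 1\}$ is exactly the orthogonal complement of $\mathrm{span}\{w_S:|S|\ge 2\}$.

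Conclusion: $\Thetag\in\mathcal{P}_1$ if and only if $\langle \Thetag,w_S\rangle=\sum_b(-1)^{|S\cap\mathrm{supp}(b)|}\theta_b=0$ for all $|S|\ge 2$, which is \eqref{eq:affine_conditions}. Combined with the first step and \cref{cor:degree}, this gives all the stated equivalences.

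I expect the main obstacle to be purely one of bookkeeping, namely keeping the $\{0,1\}$-monomial (Möbius) and $\pm1$-character (Walsh) conventions distinct. The key point is the degree-preservation of the change of basis in the first step: it is what makes "vanishing Walsh coefficients above level $1$" and "vanishing Möbius coefficients above level $1$" the same condition, even though the individual coefficients differ.
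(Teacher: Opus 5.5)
Your proof is correct, and it takes a different and more careful route than the paper's. The paper proves the theorem in one line: it applies \cref{cor:degree} with $d=1$ and says the conditions $\wh{\Theta}{S}=0$, $|S|\ge 2$, are just \eqref{eq:affine_conditions} ``rewritten in inclusion--exclusion form via \eqref{eq:mobius}''. Condition by condition, this identification does not hold. The Möbius sum for $S$ runs only over $b\le S$, with sign $(-1)^{|S|-|\mathrm{supp}(b)|}$. The sum in \eqref{eq:affine_conditions} is a character sum over all of $\mathbb{B}^n$. As linear functionals of the angle table, the two coincide (up to the sign $(-1)^n$) only when $S=[n]$. Expanding $\Thetag$ in monomials gives $\sum_{b}(-1)^{|S\cap\mathrm{supp}(b)|}\theta_b=(-1)^{|S|}\sum_{R\supseteq S}2^{\,n-|R|}\,\wh{\Theta}{R}$. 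For example, with $n=3$ and $S=\{0,1\}$ the left side equals $2\wh{\Theta}{\{0,1\}}+\wh{\Theta}{\{0,1,2\}}$, not $\wh{\Theta}{\{0,1\}}$. So the paper's argument, read literally, establishes only the ``equivalently'' clause. The passage between the two families of conditions, which you correctly identify as the real content, is left implicit there. Your two steps fill exactly that gap: the inclusion-triangular change of basis gives $\mathrm{span}\{\chimon{R}:|R|\le d\}=\mathrm{span}\{w_T:|T|\le d\}$, and orthogonality of the $w_T$ identifies this span with the annihilator of the higher-level characters. Because the argument works for every $d$, it also justifies the paper's claim that membership in $\Gclass{n}{d}$ can be tested with the Walsh--Hadamard transform, whose output is the character sums rather than the Möbius coefficients. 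A more computational alternative is to use the triangular identity above directly and argue by downward induction on $|S|$; your subspace formulation avoids computing it. The paper's version gains only brevity.
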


\begin{proof}
This is the instance $d=1$ of \cref{cor:degree}, with the vanishing
conditions $\wh{\Theta}{S}=0$ for $|S|\geq 2$ rewritten
in inclusion-exclusion form via \eqref{eq:mobius}.
\end{proof}

The conditions \eqref{eq:affine_conditions} can be verified as follows:
first compute the full Walsh--Hadamard coefficient table
$\{\wh{\Theta}{S}\}_{S\subseteq[n]}$ in $\mathcal{O}(n\,2^n)$ time
(cf.\ \cref{rem:wht}), then check that $\wh{\Theta}{S}=0$ for all
$|S|\geq 2$, which requires a single $\mathcal{O}(2^n)$ pass over the
table.
The total preprocessing cost is therefore $\mathcal{O}(n\,2^n)$,
feasible for $n\leq 20$ before any quantum computation.

\subsection{The encoding circuit theorem}

We now state and prove the main structural result, which identifies
$\deg(\Thetag)$ as a classically computable invariant that controls
the gate count of the canonical monomial-factorisation circuit for
the encoding operator.

\begin{theorem}[Encoding circuit via monomial factorisation]
\label{thm:circuit_depth}
Let $g\in\Gclass{n}{d}$, so $\Thetag$ has the multilinear expansion
\begin{equation}
  \Thetag(b)
  = \sum_{\substack{S\subseteq[n]\\ |S|\leq d}}
    \wh{\Theta}{S}\,\chimon{S}(b),
  \qquad
  \wh{\Theta}{S}
  = \sum_{T\subseteq S}(-1)^{|S\setminus T|}\,\Thetag(\indvec{T}).
  \label{eq:theta_expansion}
\end{equation}
Then the encoding operator $G_g$ factors as
\begin{equation}
  G_g
  = \prod_{\substack{S\subseteq[n]\\ |S|\leq d}}
    C^S\Ry\!\bigl(\wh{\Theta}{S}\bigr),
  \label{eq:G_factorization}
\end{equation}
where $C^S\Ry(\alpha)$ denotes the $\Ry(\alpha)$ rotation on the ancilla
controlled on the condition that all index qubits in~$S$ are in state~$1$.
The encoding operator can be implemented with exactly
\begin{equation}
  \sum_{k=0}^{d}\binom{n}{k}
  \label{eq:gate_count}
\end{equation}
controlled-$\Ry$ gates.
\end{theorem}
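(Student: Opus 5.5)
The plan is to work block by block, using the block-diagonal description of \cref{prop:encoding_cost}. Each factor $C^S\Ry(\wh{\Theta}{S})$ is itself block-diagonal with respect to $\mathbb{C}^{2N}=\bigoplus_i \mathrm{span}\{\mathbf{u}_i\}\otimes\mathbb{C}^2$. On the block indexed by $i=i(b)$ it acts on the ancilla as $\Ry(\wh{\Theta}{S})$ if all qubits in $S$ are in state $1$ and as $I_2$ otherwise. Since $\Ry(0)=I_2$, both cases can be written uniformly as $\Ry\bigl(\wh{\Theta}{S}\,\chimon{S}(b)\bigr)$. For $S=\emptyset$ the control set is empty, so $C^\emptyset\Ry(\wh{\Theta}{\emptyset})=I_N\otimes\Ry(\wh{\Theta}{\emptyset})$, consistent with $\chimon{\emptyset}\equiv 1$.

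Next I multiply the factors. A product of block-diagonal operators is block-diagonal with blockwise products, so on block $b$ the product in \eqref{eq:G_factorization} equals $\prod_{|S|\le d}\Ry\bigl(\wh{\Theta}{S}\chimon{S}(b)\bigr)$. All $\Ry$ rotations lie in the one-parameter abelian group $\{\exp(-i\alpha Y/2)\}$, so $\Ry(\alpha)\Ry(\beta)=\Ry(\alpha+\beta)$. Hence the product is independent of the ordering of the factors (which also makes \eqref{eq:G_factorization} well defined), and it collapses to
\[
  \Ry\Bigl(\sum_{|S|\le d}\wh{\Theta}{S}\,\chimon{S}(b)\Bigr)=\Ry\bigl(\Thetag(b)\bigr).
\]
The last equality uses the expansion \eqref{eq:theta_expansion}, which holds because $g\in\Gclass{n}{d}$: every coefficient with $|S|>d$ vanishes by \cref{cor:degree}, and the coefficient formula is \cref{prop:mobius}. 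By \eqref{eq:G_block} this is precisely the block of $G_g$ at index $i(b)$. Two block-diagonal operators with identical blocks are equal, which proves \eqref{eq:G_factorization}.

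The gate count \eqref{eq:gate_count} is then just the number of subsets $S\subseteq[n]$ with $|S|\le d$, namely $\sum_{k=0}^d\binom{n}{k}$. "Exactly" should be read as the count of the factors in the construction, one per such subset. Factors with $\wh{\Theta}{S}=0$ are the identity and could be dropped, so the count is an upper bound on the gates actually needed.

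The argument is essentially formal, and the point needing the most care is the additivity step. Multilinear coefficients are real numbers and may be negative or exceed $\pi$, for instance $-\pi$ in \cref{ex:sin2_deg2}, so the rotation angles must be allowed to range over all of $\R$. This is harmless because $\Ry(\alpha)$ is defined for every real $\alpha$ and the additivity $\Ry(\alpha)\Ry(\beta)=\Ry(\alpha+\beta)$ holds exactly on $\R$, with no reduction modulo $4\pi$ required. The only remaining check is that the sum $\Thetag(b)$ lands in $[0,\pi]$, which holds by construction, so that \eqref{eq:G_block} applies with the correct signs.
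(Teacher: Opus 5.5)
Your proposal is correct and follows the paper's proof essentially step for step. You restrict each factor $C^S\Ry(\wh{\Theta}{S})$ to the fibre indexed by $b$, where it acts as $\Ry(\wh{\Theta}{S}\,\chimon{S}(b))$, collapse the product to $\Ry(\Thetag(b))$ using additivity of the abelian family $\{\Ry(\alpha)\}$, identify the result with $G_g$ blockwise via \cref{prop:encoding_cost}, and count the subsets with $|S|\le d$. Your extra remarks are all consistent with the paper's treatment, including \cref{rem:interpolation}: real-valued angles such as $-\pi$, order-independence of the product, and zero coefficients making the count an upper bound.
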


\begin{proof}
The proof has three steps: we compute the action of the proposed
product circuit on each index-register fibre separately, assemble the
global operator by block-diagonal extension, and identify it with
$G_g$.

\medskip
\noindent\textbf{Step 1: Fibre-wise action.}
Since $g\in\Gclass{n}{d}$, \cref{prop:mobius,cor:degree} give the
unique multilinear expansion \eqref{eq:theta_expansion} with only
terms of degree $\leq d$.
The $(n+1)$-qubit space decomposes as
\begin{equation}
  \mathbb{C}^{2^{n+1}}
  = \mathbb{C}^N \otimes \mathbb{C}^2
  = \bigoplus_{b\in\mathbb{B}^n} V_b,
  \qquad V_b := \mathrm{span}\{\mathbf{u}_b\} \otimes \mathbb{C}^2,
  \label{eq:fibre_decomp}
\end{equation}
where $N=2^n$ and each fibre $V_b$ is the two-dimensional subspace
spanned by $\{\mathbf{u}_b\otimes\mathbf{u}_0^{(2)},\,
\mathbf{u}_b\otimes\mathbf{u}_1^{(2)}\}$.

Fix $b\in\mathbb{B}^n$.
By definition of $C^S\Ry(\alpha)$, its restriction to $V_b$ acts on
the ancilla factor as
\[
  C^S\Ry(\alpha)\big|_{V_b}
  =
  \begin{cases}
    I_{2^n}\otimes R_Y(\alpha) & \text{if } \chimon{S}(b)=1,\\
    I_{2^n}\otimes I_2          & \text{if } \chimon{S}(b)=0.
  \end{cases}
\]
Thus, restricted to $V_b$, every factor of the product
\eqref{eq:G_factorization} acts on the ancilla $\mathbb{C}^2$ as
a rotation $R_Y(\wh{\Theta}{S}\,\chimon{S}(b))$ about the $Y$-axis.
Since $\{R_Y(\alpha):\alpha\in\R\}$ is an abelian subgroup of
$\mathrm{U}(2)$ — satisfying $R_Y(\alpha)R_Y(\beta)=R_Y(\alpha+\beta)$
for all $\alpha,\beta$ — the restriction of the full product to $V_b$
is, in any order,
\begin{equation}
  \prod_{\substack{S\subseteq[n]\\ |S|\leq d}}
  C^S\Ry\!\bigl(\wh{\Theta}{S}\bigr)\Bigg|_{V_b}
  = I_{2^n}\otimes
    R_Y\!\Biggl(
      \sum_{\substack{S\subseteq[n]\\ |S|\leq d}}
      \wh{\Theta}{S}\,\chimon{S}(b)
    \Biggr)
  = I_{2^n}\otimes R_Y\!\bigl(\Thetag(b)\bigr),
  \label{eq:fibre_product}
\end{equation}
where the last equality uses \eqref{eq:theta_expansion}.
In particular, on the basis vectors of $V_b$:
\begin{equation}
  \prod_{\substack{S\subseteq[n]\\ |S|\leq d}}
  C^S\Ry\!\bigl(\wh{\Theta}{S}\bigr)\,
  \bigl(\mathbf{u}_b\otimes\mathbf{u}_0^{(2)}\bigr)
  = \mathbf{u}_b\otimes R_Y\!\bigl(\Thetag(b)\bigr)\,
    \mathbf{u}_0^{(2)}.
  \label{eq:fibre_action}
\end{equation}

\medskip
\noindent\textbf{Step 2: Global operator by block-diagonal extension.}
Every gate $C^S\Ry(\alpha)$ leaves the index register unchanged and
acts only on the ancilla; it therefore maps each fibre $V_b$ into
itself.
Consequently the product operator \eqref{eq:G_factorization} is
block-diagonal with respect to the decomposition
\eqref{eq:fibre_decomp}, and its action on all of
$\mathbb{C}^{2^{n+1}}$ is completely determined by the fibre-wise
actions~\eqref{eq:fibre_product}.

\medskip
\noindent\textbf{Step 3: Identification with $G_g$.}
By \cref{prop:encoding_cost,eq:G_block} and the definition of $G_g$
in \cref{app:qc}, the operator $G_g$ is also block-diagonal with
respect to \eqref{eq:fibre_decomp}, with block at index~$b$ equal to
$R_Y(\Thetag(b))$ acting on the ancilla factor.
Comparing with \eqref{eq:fibre_action}, the product operator and $G_g$
agree on every fibre basis vector, hence on a basis of
$\mathbb{C}^{2^{n+1}}$, and therefore as operators on
$\mathbb{C}^{2^{n+1}}$:
\[
  \prod_{\substack{S\subseteq[n]\\ |S|\leq d}}
  C^S\Ry\!\bigl(\wh{\Theta}{S}\bigr)
  \;=\; G_g.
\]

\medskip
\noindent\textbf{Gate count and circuit depth.}
The product \eqref{eq:G_factorization} contains one factor
$C^S\Ry(\wh{\Theta}{S})$ for each subset $S\subseteq[n]$ with
$|S|\leq d$; the number of such subsets is $\sum_{k=0}^{d}\binom{n}{k}$.

The gates commute in the following precise sense: on each fibre $V_b$,
every factor $C^S\Ry(\wh{\Theta}{S})$ acts on the ancilla as a
rotation $R_Y(\wh{\Theta}{S}\,\chimon{S}(b))$, and since
$\{R_Y(\alpha):\alpha\in\R\}$ is abelian the fibre-wise order is
irrelevant (as used in Step~1 above).
Globally, however, all $\binom{n}{\leq d}$ gates share the same
ancilla qubit as their target, so they \emph{cannot} be executed in
parallel: each gate occupies a separate circuit layer.
The circuit depth therefore equals the gate count
$\binom{n}{\leq d}$, not a smaller parallelised value.
The affine case $d=1$ is worked out explicitly in
\cref{app:circuit_proof}: the $n+1$ gates commute (fibre-wise) but
must be executed sequentially, giving depth $n+1=\mathcal{O}(n)$.
\end{proof}

\begin{remark}[Upper bound, and tightness at $d=n$]
\label{rem:interpolation}
\cref{thm:circuit_depth} gives an \emph{upper bound} on the gate
complexity of the encoding operator~$G_g$: the monomial-factorisation
circuit achieves $\binom{n}{\leq d}$ gates, but a more efficient
implementation using ancilla qubits, gate cancellations, or structured
decompositions might in principle do better for $d < n$.
The gate count interpolates across the hierarchy:
\begin{itemize}
  \item $d=0$ (constant): $1$ gate (one unconditional rotation).
  \item $d=1$ (affine): $n+1$ gates — linear in the number of qubits.
  \item $d=2$ (quadratic): $1 + n + \binom{n}{2} = \mathcal{O}(n^2)$ gates.
  \item $d=n$ (generic): $2^n$ gates.
\end{itemize}
At $d=n$ the upper bound $\binom{n}{\leq n}=2^n$ coincides with the
general lower bound for arbitrary $n$-qubit state preparation
\cite{ShendeBullockMarkov2006}, so \cref{thm:circuit_depth} is
\emph{tight at $d=n$}: the monomial factorisation is optimal for
generic functions.
For $1\leq d < n$, tightness is an open question; see
Open Problem~(i) in \cref{sec:conclusions}.
For intermediate degrees, $\binom{n}{\leq d}=\mathcal{O}(n^d/d!)$,
giving a gate count that grows polynomially in~$n$ for any fixed~$d$.
The degree~$d$ thus plays the role of a \emph{circuit-complexity
exponent} for the canonical encoding circuit.
\end{remark}

\begin{corollary}[Multivariate extension]
\label{cor:multivariate}
Let $D\geq 1$ and let $n_1,\ldots,n_D\geq 1$ with $n=n_1+\cdots+n_D$.
For $g:\mathcal{X}_{n_1}\times\cdots\times\mathcal{X}_{n_D}\to[0,1]$
discretised on the tensor-product grid, define the angle map
$\Thetag:\mathbb{B}^n\to[0,\pi]$ by treating the $n$~index qubits as
a single bit-string $b=(b^{(1)},\ldots,b^{(D)})\in\mathbb{B}^n$.
If $\deg(\Thetag)\leq d$, then the encoding operator $G_g$ can be
implemented with exactly $\sum_{k=0}^{d}\binom{n}{k}$ controlled-$R_Y$
gates, regardless of how the $n$ qubits are partitioned across
dimensions.
In particular, for $d=1$ the gate count is $n+1=n_1+\cdots+n_D+1$,
growing linearly in the total number of qubits.
\end{corollary}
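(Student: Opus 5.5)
The plan is to reduce \cref{cor:multivariate} to \cref{thm:circuit_depth} by observing that nothing in that theorem or its proof depends on the encoding being univariate. The tensor-product grid $\mathcal{X}_{n_1}\times\cdots\times\mathcal{X}_{n_D}$ has $2^{n_1}\cdots 2^{n_D}=2^n$ points. I would fix the bijection between $\mathbb{B}^n$ and these points that sends $b=(b^{(1)},\ldots,b^{(D)})$ to $(x_{i(b^{(1)})},\ldots,x_{i(b^{(D)})})$, with each block $b^{(j)}\in\mathbb{B}^{n_j}$ indexed by \eqref{eq:index_map}. The encoding operator $G_g$ is defined exactly as in \eqref{eq:Gg_def}, with the single index $i$ replaced by this multi-index. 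The argument of \cref{prop:encoding_cost} then applies verbatim: $G_g$ is block-diagonal over the fibres $V_b=\mathrm{span}\{\mathbf{u}_b\}\otimes\mathbb{C}^2$, and the block at $b$ is $R_Y(\Thetag(b))$.

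The key steps, in order, are as follows.
\begin{enumerate}[label=(\alph*)]
\item Note that $\Thetag:\mathbb{B}^n\to[0,\pi]$ is an ordinary real function on the Boolean cube. \Cref{prop:mobius} and \cref{cor:degree} therefore give its unique multilinear expansion, and the hypothesis $\deg(\Thetag)\le d$ leaves only the monomials $\chimon{S}$ with $|S|\le d$. These sets range over the combined index set of size $n$, so a single $S$ may contain bits from several blocks $b^{(j)}$.
\item Rerun Steps~1--3 of the proof of \cref{thm:circuit_depth} unchanged. Those steps use only three facts: the fibre decomposition \eqref{eq:fibre_decomp}, the fact that each $C^S\Ry$ preserves every fibre, and the fact that $\{R_Y(\alpha)\}$ is abelian. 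None of these refers to how the $n$ qubits are grouped into coordinates.
\item Count the factors. There is one factor per $S\subseteq[n]$ with $|S|\le d$, giving $\sum_{k=0}^d\binom{n}{k}$ factors. For $d=1$ this is $1+n=1+n_1+\cdots+n_D$.
\end{enumerate}

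I do not expect any step to be a real obstacle. The only point needing care is to state the bijection $\mathbb{B}^n\leftrightarrow$ grid explicitly, so that $\Thetag$ is well defined. It should also be remarked that the degree $\deg(\Thetag)$ depends on the chosen bit ordering only through a relabelling of coordinates. Relabelling preserves $|S|$, so the count is indeed independent of how the qubits are partitioned across dimensions.
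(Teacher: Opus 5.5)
Your proposal is correct and takes the same route as the paper, whose proof is a one-line observation: \cref{thm:circuit_depth} applies verbatim, because its argument uses only $\deg(\Thetag)\leq d$ and the total qubit count $n=n_1+\cdots+n_D$, not how the bits are partitioned across dimensions. Your explicit bijection between $\mathbb{B}^n$ and the tensor-product grid, together with the remark that relabelling coordinates preserves $|S|$, makes precise what the paper leaves implicit.
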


\begin{proof}
\cref{thm:circuit_depth} applies verbatim with $n=n_1+\cdots+n_D$:
the proof depends only on $\deg(\Thetag)\leq d$ and the total qubit
count~$n$, not on the partition of bits across dimensions.
\end{proof}

\subsection{Closure under products and sums}

The hierarchy is compatible with the multiplication and addition
circuits of \cite[Sec.~III]{CarreraVazquezWoerner2020}.
For the product circuit, which estimates $\mathbb{E}[f\cdot g]$ by
applying $F$ and $G$ to two independent ancilla qubits, the two
encoding operators act on separate ancilla factors so no cross-degree
monomials are introduced: for $f\in\Gclass{n}{d_1}$ and
$g\in\Gclass{n}{d_2}$, the gate counts $\binom{n}{\leq d_1}$ and
$\binom{n}{\leq d_2}$ are additive and neither degree is increased.
For the addition circuit, which estimates $\mathbb{E}[g+h]$ via a
shared ancilla qubit and an auxiliary bit $\beta\in\{0,1\}$, the
effective angle map $(b,\beta)\mapsto(1-\beta)\,\Thetag(b)+\beta\,\Theta_h(b)$
is multilinear of degree $\max(d_1,d_2)$, so the construction does not
raise the degree of either factor.
In both cases, neither operation increases the multilinear degree of
the angle map.

\section{Depth-versus-Accuracy Trade-off}
\label{sec:pareto}

The total integration error has two independent sources
(cf.\ \cref{rem:two_errors}): discretisation error~$E_n^{(p)}[g]$
from using a finite grid, and estimation error $\mathcal{O}(1/M)$ from
finite sampling.
The gate count per oracle call is $\binom{n}{\leq d}$ by
\cref{thm:circuit_depth}.
The interplay between these three quantities — degree~$d$, qubit
count~$n$, and shot budget~$M$ — defines a
\emph{trade-off} in the (depth, error) space.

\subsection{Discretisation-error bounds}

Following \cite{CarreraVazquezWoerner2020},
the three standard quadrature rules satisfy
\begin{align}
  E_n^{\mathrm{left}}[g] &\leq \frac{1}{2}\,
    \frac{\|g'\|_\infty}{2^n},
    \qquad g\in C^1[0,1],
    \label{eq:error_left}\\
  E_n^{\mathrm{mid}}[g] &\leq \frac{1}{24}\,
    \frac{\|g''\|_\infty}{2^{2n}},
    \qquad g\in C^2[0,1],
    \label{eq:error_mid}\\
  E_n^{\mathrm{Simpson}}[g] &\leq \frac{1}{2880}\,
    \frac{\|g^{(4)}\|_\infty}{2^{4n}},
    \qquad g\in C^4[0,1],
    \label{eq:error_simp}
\end{align}
corresponding to quadrature orders $p=1,2,4$ respectively.
In general, for a quadrature rule of order~$p\geq 1$ applied to
$g\in C^\alpha[0,1]$ with $\alpha\geq p$,
\begin{equation}
  E_n^{(p)}[g] \leq C_p\,2^{-pn}\,\|g^{(p)}\|_\infty
  \label{eq:disc_general}
\end{equation}
for some constant $C_p>0$ depending only on~$p$.
The factor $\|g^{(p)}\|_\infty$ is finite since $g^{(p)}$
is continuous on the compact interval $[0,1]$ (as $\alpha\geq p$).
The bound~\eqref{eq:disc_general} follows by applying the local
quadrature error estimate on each subinterval $I_i=[i/N,(i+1)/N)$
of length $h=2^{-n}$, which contributes $\mathcal{O}(h^{p+1})$,
and summing over the $N=2^n$ subintervals.
The regularity parameter~$\alpha$ and the quadrature order~$p$ are
thus independent: $\alpha$ is a property of the integrand~$g$,
while $p$ is a property of the chosen numerical scheme;
the bound~\eqref{eq:disc_general} holds whenever $\alpha\geq p$.

\subsection{The trade-off}

\begin{theorem}[Depth-accuracy trade-off]
\label{thm:pareto}
Let $g:[0,1]\to[0,1]$ be such that $g|_{\mathcal{X}_n}\in\Gclass{n}{d}$
and $g\in C^\alpha[0,1]$ with $\alpha\geq p\geq 1$, where $p$ is the
order of the chosen quadrature rule.
Assume $E_n^{(p)}[g]\leq C_p\,2^{-pn}\,\|g^{(p)}\|_\infty$
as in~\eqref{eq:disc_general}.
Fix any $\delta\in(0,1)$.
To achieve total integration error at most~$\varepsilon$ with
probability at least $1-\delta$, the following resources suffice:
\begin{equation}
  n \geq \frac{1}{p}\log_2\!\Bigl(\frac{2C_p\|g^{(p)}\|_\infty}{\varepsilon}\Bigr),
  \qquad
  M \geq \frac{2C_{\mathrm{est}}}{\varepsilon\sqrt{\delta}},
  \qquad
  \text{gate count per oracle call}= \binom{n}{\leq d}.
  \label{eq:pareto_resources}
\end{equation}
Substituting the minimal $n^*=\lceil\frac{1}{p}\log_2(2C_p\|g^{(p)}\|_\infty/\varepsilon)\rceil$
and taking $\delta$ a fixed constant (e.g.\ $\delta=1/4$),
the \emph{total} gate count (across all $M$ oracle calls) as a function
of $\varepsilon$ alone is
\begin{equation}
  \mathcal{O}\!\left(
    \Bigl(\frac{\log(1/\varepsilon)}{p}\Bigr)^d
    \cdot \frac{1}{\varepsilon}
  \right),
  \label{eq:total_gates_eps}
\end{equation}
using $\binom{n}{\leq d}=\mathcal{O}(n^d/d!)$ for fixed~$d$.
\end{theorem}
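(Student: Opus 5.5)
The plan is to split the error budget in half between the two independent sources identified in \cref{rem:two_errors}, and then read off the resource requirements from the discretisation bound \eqref{eq:disc_general}, the MLAE tail bound \eqref{eq:mlae_hp}, and the gate count of \cref{thm:circuit_depth}.

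\emph{Step 1: discretisation error.}
Choose $n$ so that $C_p2^{-pn}\|g^{(p)}\|_\infty\le\varepsilon/2$.
Taking $\log_2$ shows this is exactly the condition $n\ge\frac1p\log_2(2C_p\|g^{(p)}\|_\infty/\varepsilon)$.
Hence $E_n^{(p)}[g]\le\varepsilon/2$ deterministically.

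\emph{Step 2: estimation error.}
Apply \eqref{eq:mlae_hp} with the given $\delta$: with probability at least $1-\delta$,
\[
|\hat a-a|\le \frac{C_{\mathrm{est}}}{M\sqrt\delta},
\]
and this is at most $\varepsilon/2$ as soon as $M\ge 2C_{\mathrm{est}}/(\varepsilon\sqrt\delta)$.
On that event the triangle inequality of \cref{rem:two_errors} gives total error at most $\varepsilon$.

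Here the estimated quantity $a$ is the encoded amplitude, namely the left Riemann sum from \eqref{eq:oracle_prob}. For $p>1$ I will read $Q_n^{(p)}[g]$ as the amplitude actually encoded by the oracle, and identify it with $a$ as the paper implicitly does; otherwise one adds a bias term. I will state this identification explicitly, since it is the only delicate point in matching the remark's triangle inequality.

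\emph{Step 3: gate count per oracle call.}
This is \cref{thm:circuit_depth}: $G_g$ uses $\binom{n}{\le d}$ gates. The Hadamard layer and the $\mathcal O(1)$ reflections per Grover iterate contribute at most additive lower-order terms, which I will absorb.

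\emph{Step 4: total cost.}
Multiply the per-call cost by the number of oracle calls $M$. This is legitimate because $M$ in \cref{def:mlae} already counts oracle applications $\sum N_k(2k+1)$.

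With $n^*$ set to the ceiling above, we have
\[
n^*\le \frac1p\log_2(1/\varepsilon)+c_g,
\]
where $c_g$ depends only on $g$ and $p$. For fixed $d$ and $n^*\ge d$,
\[
\binom{n^*}{\le d}\le (d+1)\binom{n^*}{d}\le \frac{(d+1)(n^*)^d}{d!},
\]
which is $\mathcal O\bigl((\log(1/\varepsilon)/p)^d\bigr)$ as $\varepsilon\to0$, since the additive constant is dominated.
Taking $\delta=1/4$ gives $M=\mathcal O(1/\varepsilon)$. The product is \eqref{eq:total_gates_eps}.

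The main obstacle is making the asymptotic bookkeeping honest. The constants hidden in $\mathcal O$ depend on $d$, $p$, $\|g^{(p)}\|_\infty$ and $C_{\mathrm{est}}$, and the bound $\binom{n}{\le d}=\mathcal O(n^d/d!)$ needs $d$ fixed and $n\to\infty$. I will state these dependencies explicitly. Everything else is direct substitution.
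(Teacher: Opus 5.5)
Your proposal follows the paper's proof essentially step for step: the same $\varepsilon/2$--$\varepsilon/2$ split of the error, the same solution of $C_p2^{-pn}\|g^{(p)}\|_\infty\le\varepsilon/2$ for $n^*$, the same use of \eqref{eq:mlae_hp} to obtain $M\ge 2C_{\mathrm{est}}/(\varepsilon\sqrt\delta)$, the same identification $a=Q_{n^*}^{(p)}[g]$ (which the paper also simply asserts ``by construction''), and the same product $\binom{n^*}{\le d}\cdot M$. The differences are minor. The paper bounds $\binom{m}{\le d}\le(em/d)^d$, whereas your step $\binom{n^*}{\le d}\le(d+1)\binom{n^*}{d}$ follows from monotonicity only when $n^*\ge 2d$, not merely $n^*\ge d$ (it fails, e.g.\ at $n^*=d\ge 2$); this is harmless as $\varepsilon\to0$. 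Also, the theorem counts only controlled-$R_Y$ gates, so absorbing the Hadamard and reflection costs is unnecessary, and it would in fact fail at $d=0$, where the $n$ Hadamards dominate the single rotation.
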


\begin{proof}
We verify the three resource bounds in~\eqref{eq:pareto_resources} and
then derive~\eqref{eq:total_gates_eps}.

The condition $E_n^{(p)}[g]\leq\varepsilon/2$ is equivalent to
$2^{pn}\geq 2C_p\|g^{(p)}\|_\infty/\varepsilon$, which gives
$n\geq\frac{1}{p}\log_2(2C_p\|g^{(p)}\|_\infty/\varepsilon)$.
The minimal integer satisfying this is
\[
n^*=\Bigl\lceil\tfrac{1}{p}\log_2\!\bigl(2C_p\|g^{(p)}\|_\infty/\varepsilon\bigr)\Bigr\rceil
=\mathcal{O}\!\bigl(\tfrac{1}{p}\log(1/\varepsilon)\bigr).
\]
By construction of the oracle $G_g$ from $g|_{\mathcal{X}_{n^*}}$
(cf.\ \cref{subsec:oracle}),
the encoded amplitude equals $a = Q_{n^*}^{(p)}[g]$.
By~\eqref{eq:mlae_hp}, the MLAE estimator with a linear schedule
satisfies
\[
  \mathbb{P}\!\Bigl(|\hat{a} - Q_{n^*}^{(p)}[g]| \leq
  \tfrac{C_{\mathrm{est}}}{M\sqrt{\delta}}\Bigr) \geq 1-\delta.
\]
Requiring $C_{\mathrm{est}}/(M\sqrt{\delta})\leq\varepsilon/2$ gives
$M\geq 2C_{\mathrm{est}}/(\varepsilon\sqrt{\delta}) = \mathcal{O}(1/\varepsilon)$
for any fixed $\delta>0$.
By \cref{thm:circuit_depth}, $G_g$ requires $\binom{n}{\leq d}$
controlled-$\Ry$ gates per oracle call, establishing the third bound
in~\eqref{eq:pareto_resources}.
On the event $\{|\hat{a}-Q_{n^*}^{(p)}[g]|\leq\varepsilon/2\}$,
which holds with probability at least $1-\delta$, the total error
satisfies
\[
  \bigl|I[g]-\hat{a}\bigr|
  \leq E_{n^*}^{(p)}[g] + \bigl|\hat{a}-Q_{n^*}^{(p)}[g]\bigr|
  \leq \frac{\varepsilon}{2}+\frac{\varepsilon}{2} = \varepsilon
\]
by the triangle inequality.

For the total gate count, note that all $M$ oracle calls together use
$\binom{n^*}{\leq d}\cdot M$ controlled-$\Ry$ gates.
The standard binomial bound $\binom{m}{\leq d}\leq(em/d)^d$,
where $e$ is Euler's number, gives
$\binom{n^*}{\leq d}=\mathcal{O}((n^*)^d)
=\mathcal{O}\bigl((\tfrac{1}{p}\log\tfrac{1}{\varepsilon})^d\bigr)$,
and multiplying by $M=\mathcal{O}(1/\varepsilon)$ yields~\eqref{eq:total_gates_eps}.
\end{proof}

\begin{remark}[Reading the trade-off formula]
\label{rem:reading_tradeoff}
Two aspects of \cref{thm:pareto} deserve explicit comment.

\medskip
\noindent\textit{(i) Role of $\delta$.}
The asymptotic formula~\eqref{eq:total_gates_eps} is stated after fixing
$\delta$ as a constant, so $1/\sqrt{\delta}$ is absorbed into the
$\mathcal{O}(\cdot)$ implicit constant.
The full $\delta$-dependent total gate count, read directly from
\eqref{eq:pareto_resources}, is
$\mathcal{O}\bigl((\log(1/\varepsilon)/p)^d/(\varepsilon\sqrt{\delta})\bigr)$.
Thus $\varepsilon$ and $\delta$ enter independently: tightening the
success probability by decreasing $\delta$ costs an extra factor of
$\delta^{-1/2}$ in gate count, independently of the target accuracy.

\medskip
\noindent\textit{(ii) Interpolation across degrees.}
Formula~\eqref{eq:total_gates_eps} interpolates between two extremes:
at $d=1$ the total cost is $\mathcal{O}(\varepsilon^{-1}\log(1/\varepsilon))$,
essentially linear in the shot budget; at $d=n^*$ it becomes
$\mathcal{O}(\varepsilon^{-(1+1/p)})$, the same order as classical
quadrature without the Monte Carlo overhead.
For intermediate degrees $1<d<n^*$ the extra cost over the $d=1$ case
is the polylogarithmic factor $(\log(1/\varepsilon)/p)^{d-1}$, which is
the quantitative signature of the encoding degree in the circuit-resource
sense.
\end{remark}

At the same discretisation level, a classical $p$-th order quadrature
combined with $M$ Monte Carlo repetitions requires total evaluation
count $N\cdot M=\mathcal{O}(\varepsilon^{-(2+1/p)})$.
The quantum gate count~\eqref{eq:total_gates_eps} for $d=1$ is
$\mathcal{O}(\varepsilon^{-1}\log(1/\varepsilon))$, which lies below
this fixed-discretisation Monte Carlo curve for all sufficiently
small~$\varepsilon$ and every $p\geq 1$; the crossover point depends on
the implicit constants $C_p$ and $C_{\mathrm{est}}$ in
\eqref{eq:pareto_resources}.  Example~\ref{ex:pareto_rules} below
makes this comparison explicit for the three standard rules, and the
curves are illustrated in \cref{fig:pareto}.

\begin{example}[Trade-off for the three standard rules]
\label{ex:pareto_rules}
For $g:[0,1]\to[0,1]$ with $g|_{\mathcal{X}_n}\in\Gclass{n}{1}$
and $g\in C^4[0,1]$, the minimal qubit count and total gate count
for each rule are:
\begin{center}
\renewcommand{\arraystretch}{1.6}
\begin{tabular}{lccc}
\toprule
Rule & Order $p$ & Minimal qubits $n^*$ & Total gates \\
\midrule
Left Riemann  & $1$ &
  $\bigl\lceil\log_2(\|g'\|_\infty/\varepsilon)\bigr\rceil$ &
  $\mathcal{O}(\varepsilon^{-1}\log(\|g'\|_\infty/\varepsilon))$ \\[4pt]
Midpoint      & $2$ &
  $\bigl\lceil\tfrac{1}{2}\log_2(\|g''\|_\infty/12\varepsilon)\bigr\rceil$ &
  $\mathcal{O}(\varepsilon^{-1}\log(\|g''\|_\infty/\varepsilon))$ \\[4pt]
Simpson       & $4$ &
  $\bigl\lceil\tfrac{1}{4}\log_2(\|g^{(4)}\|_\infty/720\varepsilon)\bigr\rceil$ &
  $\mathcal{O}(\varepsilon^{-1}\log(\|g^{(4)}\|_\infty/\varepsilon))$ \\
\bottomrule
\end{tabular}
\end{center}
All three rules achieve the same asymptotic gate count
$\mathcal{O}(\varepsilon^{-1}\log(1/\varepsilon))$ for $d=1$, but
higher-order rules require fewer qubits (by factors of $1/2$ and $1/4$),
directly reducing the constant in the gate count through the smaller
value of~$n^*$.
\end{example}

\medskip
\noindent\cref{thm:pareto} requires $g\in C^\alpha[0,1]$ with $\alpha\geq p\geq 1$.
The intermediate Sobolev regime $s\in(1/2,1)$, where $g\in W^{s,2}(0,1)$
but $g\notin C^1[0,1]$ in general, is covered by a straightforward
extension.
Throughout we take $s'\in(1/2,s)$, so that the embedding
$W^{s',2}(0,1)\hookrightarrow C[0,1]$ holds and point evaluations of~$g$
are well defined; this is the standing requirement for
any quadrature rule based on function values, quantum or classical
\cite[Chap.~4]{NovakWozniakowski2008}.
Under this hypothesis the midpoint rule on the $2^n$-point grid
satisfies
\begin{equation}
  E_n^{\mathrm{mid}}(g) \leq C_{s'}\,2^{-s'n}\,\|g\|_{W^{s',2}},
  \label{eq:disc_sobolev_mid}
\end{equation}
which is the order-optimal deterministic rate $N^{-s'}$ for the class
$B_{W^{s',2}}$ \cite{NovakWozniakowski2008}.
Hence \cref{thm:pareto} applies with $p$ replaced by $s'$, yielding
$n^*=\mathcal{O}(\log(1/\varepsilon))$ as before and total gate count
$\mathcal{O}(\varepsilon^{-1}(\log 1/\varepsilon)^d)$; only the
constant in $n^*$ changes.
The classical Monte Carlo cost in this regime is
$\mathcal{O}(\varepsilon^{-2})$, so the quadratic gain of QAE in the
estimation stage persists once the encoding is affine.

\begin{remark}[Scope of the comparison]
\label{rem:scope}
The comparison above is between the \emph{gate count} of the QAE
pipeline and the \emph{sample count} of classical Monte Carlo, at a
fixed discretisation level~$n$.
It is not a statement in the information-based complexity model:
the $2^n$ classical evaluations of~$g$ consumed when the angle table
is precomputed are of exactly the same nature as the evaluations
charged to a classical quadrature rule, and the bias term
$E_n^{\mathrm{mid}}(g)$ is subject to the same deterministic lower
bounds over Sobolev balls as any $N$-point rule
\cite{Novak2001,NovakWozniakowski2008}.
What \cref{thm:pareto} asserts is that, for $g\in\Gclass{n}{d}$ with
small~$d$, the \emph{quantum} part of the pipeline does not add a
dominant overhead --- not that the discretisation requirement can be
circumvented.
For the optimal rates of integration over Sobolev balls in the
deterministic, randomised and quantum query models we refer to
\cite{Novak2001,Heinrich2002,Heinrich2003}.
\end{remark}

\begin{figure}[t]
\centering
\includegraphics[width=0.78\textwidth]{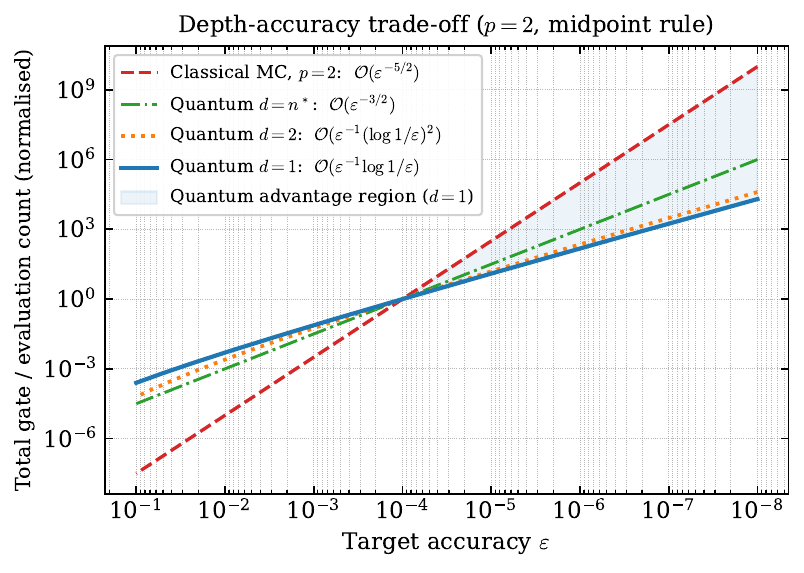}
\caption{Depth-accuracy trade-off: total gate count (quantum)
  or evaluation count (classical) normalised to~$1$ at $\varepsilon=10^{-4}$,
  plotted against target accuracy~$\varepsilon$ for the midpoint rule
  ($p=2$).
  Quantum curves for $d=1$ (solid blue), $d=2$ (dotted orange), and
  $d=n^*$ (dash-dot green) are given by \eqref{eq:total_gates_eps};
  the classical Monte Carlo curve (dashed red) scales as
  $\mathcal{O}(\varepsilon^{-5/2})$.
  The shaded region marks where the quantum $d=1$ curve lies below the
  fixed-discretisation Monte Carlo curve.
  For $d=1$ the quantum gate count grows as
  $\mathcal{O}(\varepsilon^{-1}\log(1/\varepsilon))$,
  below the corresponding Monte Carlo sample count in this restricted
  comparison for any $p \geq 1$.}
\label{fig:pareto}
\end{figure}

\subsection{Encoding degree versus Sobolev regularity}
\label{subsec:decoupling}

The gate count of \cref{thm:circuit_depth} is governed by the
multilinear degree of $\Thetag:\mathbb{B}^n\to[0,\pi]$, which is a
property of the \emph{discrete} angle table, not of the
\emph{continuous} function~$g$.
The two are in fact independent: one can construct functions~$g$ of
arbitrarily low Sobolev regularity that nevertheless have an angle map
of degree~$1$, so that the encoding circuit stays at depth
$\mathcal{O}(n)$.
This subsection makes that decoupling precise.
It is a structural statement about the hierarchy
$\{\Gclass{n}{d}\}$; no complexity comparison is drawn from it
(cf.\ \cref{rem:scope}).

Recall that the fractional Sobolev space $W^{s,2}(0,1)$, $s\in(0,1)$,
consists of functions $u\in L^2(0,1)$ with finite Aronszajn--Slobodeckij
seminorm
\begin{equation}
  |u|_{W^{s,2}}^2
  = \int_0^1\!\int_0^1\frac{|u(x)-u(y)|^2}{|x-y|^{1+2s}}\,dx\,dy < \infty.
  \label{eq:sobolev_seminorm}
\end{equation}

\begin{lemma}[Affine-angle functions with prescribed Sobolev regularity]
\label{lem:gs_construction}
For $n\geq 2$ and $s\in(0,1/2)$, there exists
$g_s:[0,1]\to[0,1]$ satisfying:
\begin{enumerate}[label=\textup{(\roman*)}]
\item $g_s|_{\mathcal{X}_n}\in\Gclass{n}{1}$: the angle table
  $\{\Theta_{g_s|_{\mathcal{X}_n}}(b) : b\in\mathbb{B}^n\}$ satisfies all affine consistency conditions
  \eqref{eq:affine_conditions}.
\item $g_s\in W^{s',2}(0,1)$ for all $s'<s$, but
  $g_s\notin W^{s,2}(0,1)$.
\end{enumerate}
The function $g_s$ is continuous on the interior of each grid cell and
may have jump discontinuities at grid interfaces; the construction
below ensures that at least one such jump occurs.
It is this jump structure that confines the statement to $s<1/2$: the
interface contribution of a bounded jump is finite precisely when
$s'<1/2$, and it diverges otherwise.
\end{lemma}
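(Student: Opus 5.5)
We construct $g_s$ by decoupling its grid values from its behaviour inside the cells. First the grid values: choose an affine angle table, e.g.\ $\Theta(b)=\tfrac{\pi}{4}b_0$, or the table of \cref{ex:sin2} rescaled to $\Theta(b)=\pi\, i(b)/2^n$. By \cref{thm:affine_char} (equivalently \cref{cor:degree} with $d=1$), any function $g_s$ whose values on $\mathcal{X}_n$ equal $\sin^2(\Theta(b)/2)$ lies in $\Gclass{n}{1}$ after restriction. This settles (i), because (i) depends only on the values $g_s(x_i)$ and leaves complete freedom on the open cells $(x_i,x_{i+1})$.

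For (ii) we put the singular behaviour inside one cell, say $I_0=[0,2^{-n})$, and keep $g_s$ smooth and $[0,1]$-valued elsewhere. The standard profile with exactly critical regularity is a lacunary (Weierstrass-type) series
\[
w_s(x)=\sum_{j\ge 1}2^{-sj}\,\psi(2^{j}x),
\]
where $\psi$ is a smooth bump. Its $W^{s',2}$ seminorm, computed through Littlewood--Paley/Besov equivalence, behaves like $\sum_j 2^{2(s'-s)j}$, which is finite for $s'<s$ and divergent for $s'=s$. We then set $g_s=c_1+c_2\,\eta\,w_s$ on $I_0$. Here $\eta$ is a cutoff supported in the interior of $I_0$, so the prescribed value at $x_0$ is untouched, and $c_1,c_2$ are chosen so that $g_s$ stays in $[0,1]$ and $g_s(x_0)=c_1$ matches the table. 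On the other cells we take the constant equal to the left grid value, which yields a piecewise function with jumps at the interfaces, as the statement anticipates. To guarantee at least one jump, we pick the table so that two consecutive grid values differ; the step function then jumps at $x_1$.

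Next we verify the Sobolev claims via the seminorm \eqref{eq:sobolev_seminorm}. We split the double integral into a near-diagonal part over the same or adjacent cells and a far part. The far part is bounded trivially by $\|g_s\|_\infty^2\,\delta^{-1-2s'}$. A jump of a bounded function contributes $\int\!\int |x-y|^{-1-2s'}$ over $x<x_k<y$, which is finite exactly when $s'<1/2$. Since $s<1/2$, the jumps cost nothing for every $s'<s$. The bump part contributes the lacunary sum above, which is finite for $s'<s$.

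For $g_s\notin W^{s,2}$ we use locality: if $g_s\in W^{s,2}(0,1)$, then multiplying by a smooth cutoff supported in $I_0$ would place $\eta w_s$ in $W^{s,2}$, because multiplication by smooth functions preserves $W^{s,2}$. That contradicts the divergence of the lacunary sum at $s'=s$.

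The main obstacle is the sharp two-sided estimate for $w_s$: showing that the seminorm is finite for $s'<s$ is routine, but showing divergence at exactly $s$ needs a lower bound. We would attempt the Besov characterisation first, $|u|_{W^{s,2}}^2\simeq\sum_j 2^{2sj}\|\Delta_j u\|_2^2$. If that proves unwieldy, we would use a direct lower bound on the seminorm restricted to dyadic annuli $|x-y|\sim 2^{-j}$, where the $j$-th term dominates and each annulus contributes a uniform positive constant. That uniform contribution makes the sum diverge at $s'=s$.
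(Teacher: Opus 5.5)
Your treatment of (i), the far-field bound and the jump integral are fine. The gap is in the profile $w_s$, which is the only source of the non-membership $g_s\notin W^{s,2}(0,1)$.

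\textbf{The profile as written fails.} Take $\psi$ to be a smooth bump in the usual, compactly supported sense. Then $\psi(2^j\cdot)$ concentrates on an interval of length $\sim 2^{-j}$ at the origin. The change of variables $u=2^jx$, $v=2^jy$ in the Gagliardo seminorm gives $|\psi(2^j\cdot)|^2_{W^{s',2}(\mathbb{R})}=2^{j(2s'-1)}|\psi|^2_{W^{s',2}(\mathbb{R})}$. The $j$-th term of $w_s$ therefore contributes $2^{j(2s'-2s-1)}$, not $2^{2j(s'-s)}$. Your $w_s$ has a single singularity of type $|x|^s$ at $0$, and it lies in $W^{s',2}$ for every $s'<s+\tfrac12$, in particular for $s'=s$.

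There is a second, independent failure. The only singular point of $w_s$ is $x=0$, and your cutoff $\eta$ is supported in the interior of $I_0$. Hence $\eta w_s\in C^\infty$, and the resulting $g_s$ is piecewise smooth with finitely many jumps. Such a function lies in $W^{s',2}(0,1)$ for all $s'<\tfrac12$, hence in $W^{s,2}(0,1)$. As written, the second half of (ii) fails.

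\textbf{The periodic repair still leaves the lower bound open.} You need a genuinely Weierstrass-type profile: $\psi$ must be $1$-periodic and non-constant, so that the dilates $\psi(2^j\cdot)$ do not concentrate. Then the count $\sum_j 2^{2j(s'-s)}$ is correct as an upper bound. But the lower bound at $s'=s$, which you correctly call the crux, does not follow from your annulus heuristic. For $|x-y|\sim 2^{-j}$, the neighbouring terms $i=j\pm1$ already produce increments of the same order $2^{-sj}$ as the $j$-th term. The tails $\sum_{i<j}2^{-si}2^{i-j}$ and $\sum_{i>j}2^{-si}$ are of that order too. So no single term dominates, and a lower bound needs an orthogonality argument. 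For a general smooth periodic $\psi$ the dilates also overlap in frequency, so the Besov route is not automatic either. Finally, with a cutoff you would need the local statement $\eta w_s\notin W^{s,2}$, not just the global one for $w_s$.

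\textbf{How the paper avoids these issues.} It takes $\phi(t)=\sin^2(\pi t)$, whose only nonzero Fourier modes are $0,\pm1$. The dilates $\phi(2^m\cdot)$ then have pairwise disjoint nonzero frequencies $\pm 2^m$. Parseval gives the exact identity $\sum_k|k|^{2s'}|\widehat{(w_s)}_k|^2=c\sum_{m\ge 0}2^{2m(s'-s)}$ with an explicit $c>0$, which settles both directions at once. Instead of a cutoff, it rescales $w_s$ to fill whole cells, $g_s(x)=v_i+(v_{i+1}-v_i)\,w_s(2^nx-i)$ on $I_i$, using $w_s(0)=0$ to preserve the grid values. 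The cell-scaling identity then gives $|g_s|^2_{W^{s,2}(I_0)}=(v_1-v_0)^2\,2^{n(2s-1)}|w_s|^2_{W^{s,2}(0,1)}=\infty$, once the table is chosen with $v_1\neq v_0$.
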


\begin{proof}
\textit{Construction of the angle table and grid values.}
Choose $\gamma\in\R$ and $(c_0,\ldots,c_{n-1})\in\R^n$ with all
$c_k\neq 0$ so that
$0\leq\gamma+\sum_{k=0}^{n-1}c_k b_k\leq\pi$ for every
$b\in\mathbb{B}^n$, and define
\begin{equation}
  \Theta_{g_s|_{\mathcal{X}_n}}(b) = \gamma+\sum_{k=0}^{n-1}c_k b_k,
  \qquad b\in\mathbb{B}^n.
  \label{eq:affine_angle_pf}
\end{equation}
This is a multilinear polynomial of degree~$1$ by construction, so the
affine consistency conditions~\eqref{eq:affine_conditions} hold and
(i) is satisfied.
Set $v_i := \sin^2(\Theta_{g_s|_{\mathcal{X}_n}}(b(i))/2) \in [0,1]$ for each
$i \in \{0,\ldots,N-1\}$; these are the target values of $g_s$ at the
grid points.

\textit{Weierstrass-type interpolation.}
Partition $[0,1]$ into $N = 2^n$ half-open intervals
$I_i = [i/N, (i+1)/N)$.
We take $\phi:\R\to[0,1]$ to be the $1$-periodic function defined by
$\phi(t) = \sin^2(\pi t)$ on $[0,1]$ and extended by periodicity.
This satisfies $\phi(0)=0$, $\phi(1/2)=1$, and
$\phi(k)=0$ for every $k\in\mathbb{Z}$.
Set $\phi_m(t) := 2^{-ms}\phi(2^m t)$ and define the Weierstrass-type
series
\begin{equation}
  w_s(t) = \frac{1}{Z_s}\sum_{m=0}^{\infty}\phi_m(t),
  \qquad t\in[0,1],
  \label{eq:weierstrass_pf}
\end{equation}
where
\[
  Z_s := \sum_{m=0}^{\infty} 2^{-ms} = \frac{1}{1-2^{-s}} > 0
  \qquad (s > 0);
\]
the series in~\eqref{eq:weierstrass_pf} converges uniformly since
$|\phi_m(t)|\leq 2^{-ms}$ and $\sum_m 2^{-ms}<\infty$
(Weierstrass $M$-test \cite{Zygmund2002}).
Since $\phi(k)=0$ for all $k\in\mathbb{Z}$, we have
$\phi_m(0) = 2^{-ms}\phi(0) = 0$ for every $m\geq 0$,
hence $w_s(0) = 0$.
Since $\phi(1/2)=1$ and $\phi_m(1/2) = 2^{-ms}\sin^2(\pi\cdot 2^{m-1})=0$
for all $m\geq 1$ (as $2^{m-1}\in\mathbb{Z}$), the series satisfies
$w_s(1/2) = Z_s^{-1}\phi_0(1/2)=Z_s^{-1}$; in particular,
$w_s$ is non-constant (it takes the value $0$ at $t=0$ and
a strictly positive value at $t=1/2$), which is all that is needed below.%
\footnote{The specific value $w_s(1/2)=Z_s^{-1}$ follows from
$\phi_m(1/2)=0$ for $m\geq 1$ and $\phi_0(1/2)=1$, giving
$w_s(1/2) = Z_s^{-1}$.
However, only $w_s(0)=0$ and non-constancy are used in the proof.}
On each interval $I_i$ define
\begin{equation}
  g_s(x) = v_i + (v_{i+1}-v_i)\cdot w_s(2^n x - i),
  \qquad x\in I_i,
  \label{eq:gs_def}
\end{equation}
with the convention $v_N := v_0$ for the last interval.
Set also $g_s(1):=v_N$.
Since $w_s(0) = 0$, we have $g_s(i/N) = v_i$ for every
$i \in \{0,\ldots,N-1\}$, so the grid values of $g_s$ match the
prescribed $v_i$ and the angle table is exactly \eqref{eq:affine_angle_pf},
confirming (i).

\textit{Sobolev regularity.}
We prove directly that $w_s \in W^{s',2}(0,1)$ for every $s'<s$
and $w_s\notin W^{s,2}(0,1)$.
Since $W^{s',2}(0,1) = H^{s'}(0,1)$ for $s'\in(0,1)$, the squared
norm $\|f\|^2_{W^{s',2}}$ is equivalent to
\begin{equation}
  \sum_{k\in\mathbb{Z}} (1+|k|^2)^{s'}\,|\hat{f}_k|^2,
  \label{eq:sobolev_fourier}
\end{equation}
where $\hat{f}_k = \int_0^1 f(t)\,e^{-2\pi i k t}\,dt$ are the Fourier
coefficients of~$f$ \cite[§2.5]{Triebel1983}.

Since $\phi$ is $1$-periodic, the Fourier coefficients of $\phi_m$ are
\begin{equation}
  \widehat{(\phi_m)}_k
  = \begin{cases}
      2^{-ms}\,\hat\phi_{\,k/2^m} & \text{if } 2^m \mid k, \\
      0 & \text{otherwise,}
    \end{cases}
  \label{eq:phi_m_fourier}
\end{equation}
where $\hat\phi_\ell = \int_0^1\phi(t)\,e^{-2\pi i\ell t}\,dt$.
Substituting \eqref{eq:phi_m_fourier} into \eqref{eq:sobolev_fourier}
and changing the summation variable $k = 2^m\ell$
(so that $k$ runs over all multiples of $2^m$):
\begin{equation}
  \sum_{k\in\mathbb{Z}} |k|^{2s'}\,|\widehat{(\phi_m)}_k|^2
  = \sum_{\ell\in\mathbb{Z}} |2^m\ell|^{2s'}\,
    \bigl|2^{-ms}\hat\phi_\ell\bigr|^2
  = 2^{2m(s'-s)}\sum_{\ell\in\mathbb{Z}}|\ell|^{2s'}\,|\hat\phi_\ell|^2
  = 2^{2m(s'-s)}\,\|\phi\|^2_{H^{s'}},
  \label{eq:phi_m_norm_exact}
\end{equation}
where $\|\phi\|^2_{H^{s'}} := \sum_{\ell\in\mathbb{Z}}|\ell|^{2s'}
|\hat\phi_\ell|^2 < \infty$ since $\phi$ is smooth.
For the explicit choice $\phi(t)=\sin^2(\pi t)=\tfrac{1}{2}(1-\cos(2\pi t))$,
the only non-zero coefficients are $\hat\phi_0 = \tfrac{1}{2}$ and
$\hat\phi_{\pm 1} = -\tfrac{1}{4}$,
giving $\|\phi\|^2_{H^{s'}} = 2\cdot\tfrac{1}{16}\cdot 1^{2s'} = \tfrac{1}{8}$
for every $s' \geq 0$, which is finite and explicit.
Note that \eqref{eq:phi_m_norm_exact} is an exact equality, with
no approximation: the change of variable $k=2^m\ell$ is lossless
because $\widehat{(\phi_m)}_k = 0$ whenever $2^m \nmid k$.

Since different terms $\phi_m$ and $\phi_{m'}$ ($m\neq m'$) have
disjoint Fourier supports — for $\phi(t)=\sin^2(\pi t)$, one has
$\widehat{(\phi_m)}_k\neq 0$ only for $k\in\{0,\pm 2^m\}$
(since the only non-zero $\hat\phi_\ell$ are at $\ell\in\{0,\pm 1\}$),
and for $m\neq m'$ with $m,m'\geq 1$ the sets $\{\pm 2^m\}$ and
$\{\pm 2^{m'}\}$ are disjoint — the cross terms in $\|w_s\|^2_{W^{s',2}}$ vanish
identically, giving
\begin{equation}
  \|w_s\|^2_{W^{s',2}(0,1)}
  = \frac{1}{Z_s^2}
    \sum_{m=0}^\infty
    \sum_{k\in\mathbb{Z}}|k|^{2s'}\,|\widehat{(\phi_m)}_k|^2
  = \frac{\|\phi\|^2_{H^{s'}}}{Z_s^2}
    \sum_{m=0}^\infty 2^{2m(s'-s)}.
  \label{eq:ws_sobolev_exact}
\end{equation}
The right-hand side of~\eqref{eq:ws_sobolev_exact} is a geometric
series with ratio $2^{2(s'-s)}$, which converges if and only if
$s'<s$.
Hence $w_s\in W^{s',2}(0,1)$ for every $s'<s$, and
$w_s\notin W^{s,2}(0,1)$.

On each interval $I_i = [i/N,(i+1)/N)$, the function $g_s$ satisfies
$g_s(x) = v_i + (v_{i+1}-v_i)\,w_s(2^n x - i)$.
We compute $|g_s|^2_{W^{s',2}(I_i)}$ directly from the
definition~\eqref{eq:sobolev_seminorm}.
Setting $t = 2^n x - i$ and $u = 2^n y - i$
(so $dx = 2^{-n}\,dt$, $dy = 2^{-n}\,du$, $|x-y| = 2^{-n}|t-u|$):
\begin{align}
  |g_s|^2_{W^{s',2}(I_i)}
  &= \int_0^1\!\int_0^1
     \frac{(v_{i+1}-v_i)^2\,|w_s(t)-w_s(u)|^2}
          {(2^{-n}|t-u|)^{1+2s'}}\,
     2^{-n}\,dt\;2^{-n}\,du \notag\\
  &= (v_{i+1}-v_i)^2\;2^{n(1+2s')}\;2^{-2n}
     \int_0^1\!\int_0^1
     \frac{|w_s(t)-w_s(u)|^2}{|t-u|^{1+2s'}}\,dt\,du \notag\\
  &= (v_{i+1}-v_i)^2\;2^{n(2s'-1)}\;
     |w_s|^2_{W^{s',2}(0,1)}.
  \label{eq:sobolev_cell}
\end{align}
Summing~\eqref{eq:sobolev_cell} over $i = 0,\ldots,N-1$ controls the
within-cell contribution:
\[
  \sum_{i=0}^{N-1}|g_s|^2_{W^{s',2}(I_i)}
  \;\leq\; 2^{n(2s'-1)}\,|w_s|^2_{W^{s',2}(0,1)}
           \sum_{i=0}^{N-1}(v_{i+1}-v_i)^2
  \;<\; \infty,
\]
since $w_s \in W^{s',2}(0,1)$ for every $s'<s$ and the sum over~$i$
is finite.
It remains to control the cross-cell terms.  The function $g_s$ is
bounded by construction, and it has only finitely many jumps, located
at the grid interfaces.  For $s'<s<1/2$, the singular contribution
near each interface is bounded by a constant multiple of
\[
  \int_0^{1/N}\!\int_0^{1/N}\frac{dr\,dt}{(r+t)^{1+2s'}} < \infty,
\]
while cross terms away from interfaces are trivially finite.
Hence $g_s \in W^{s',2}(0,1)$ for every $s'<s$, giving the upper
half of~(ii).

Conversely, take $i^* = 0$. The bit-strings $b(0)=(0,\ldots,0)$ and
$b(1)=(1,0,\ldots,0)$ differ only in bit $k=0$, so
$\Theta(b(1))-\Theta(b(0)) = c_0\neq 0$ by hypothesis.
Both angles lie in $[0,\pi]$, where $\theta\mapsto\sin^2(\theta/2)$ is
strictly increasing; hence $v_1\neq v_0$.
From~\eqref{eq:sobolev_cell} with $s'=s$ and $i^*=0$,
\[
  |g_s|^2_{W^{s,2}(I_0)}
  \;=\; (v_1-v_0)^2\,2^{n(2s-1)}\,
        |w_s|^2_{W^{s,2}(0,1)}.
\]
Since $w_s \notin W^{s,2}(0,1)$, the right-hand side is $+\infty$,
hence $g_s \notin W^{s,2}(0,1)$, giving the lower half of~(ii).
\end{proof}

\begin{remark}[Decoupling of regularity and encoding complexity]
\label{rem:decoupling}
The construction of \cref{lem:gs_construction} shows that the Sobolev
regularity of~$g_s$ and the multilinear degree of~$\Theta_{g_s|_{\mathcal{X}_n}}$ are
\emph{independent} parameters.
The oscillation of $g_s$ — which determines $|g_s|_{W^{s,2}}$ and hence
the classical approximation difficulty — is encoded entirely in the
\emph{within-cell} interpolation via $w_s$.
The multilinear degree of $\Theta_{g_s|_{\mathcal{X}_n}}$, by contrast, depends only on the
\emph{cell boundary values} $\{v_i\}$, which are chosen freely to
satisfy the affine conditions~\eqref{eq:affine_conditions}.
One can therefore make $g_s$ arbitrarily rough (by taking $s\to 0$)
without changing the degree of $\Theta_{g_s|_{\mathcal{X}_n}}$ or the circuit depth of the
encoding operator.
\end{remark}

\section{Computational Illustration}
\label{sec:experiments}

This section provides a computational illustration of the
angle-structure hierarchy $\{\Gclass{n}{d}\}$ at $n=2$ through a
three-tier protocol: a noise-free simulator study, hardware experiments
on the SpinQ Triangulum 3-qubit NMR processor~\cite{SpinQ2021}, and
cross-platform experiments on the IBM Kingston superconducting
processor~\cite{IBMQuantum2024}.
The experiments confirm that the degree stratum $\Gclass{n}{d}$ acts
as a practical hardware-feasibility criterion consistent with the
gate-count formula of \cref{thm:circuit_depth}: circuits at degree
$d=1$ ($n+1=3$ gates) execute successfully on both platforms, while
those at degree $d=2$ ($\binom{2}{\leq 2}=4$ gates) exceed the
Triangulum coherence budget and fail as predicted.

\subsection{Illustrative setup}
\label{subsec:setup}

We estimate $\int_0^1 g(x)\,dx$ for three functions on the 2-qubit
grid $\mathcal{X}_2 = \{0,\tfrac{1}{4},\tfrac{1}{2},\tfrac{3}{4}\}$,
one at each degree level of the hierarchy:
\begin{align}
  g_0(x) &= \tfrac{1}{4},
    && g_0|_{\mathcal{X}_2}\in\Gclass{2}{0},
    && \binom{2}{\leq 0}=1 \text{ gate},
    && \textstyle\int_0^1 g_0 = \tfrac{1}{4},
    \label{eq:g0_def}\\
  g_1(x) &= \sin^2\!\bigl(\tfrac{\pi x}{2}\bigr),
    && g_1|_{\mathcal{X}_2}\in\Gclass{2}{1},
    && \binom{2}{\leq 1}=3 \text{ gates},
    && \textstyle\int_0^1 g_1 = \tfrac{1}{2},
    \label{eq:g1_def}\\
  g_2(x) &= \sin^2(\pi x),
    && g_2|_{\mathcal{X}_2}\in\Gclass{2}{2}\setminus\Gclass{2}{1},
    && \binom{2}{\leq 2}=4 \text{ gates},
    && \textstyle\int_0^1 g_2 = \tfrac{1}{2}.
    \label{eq:g2_def}
\end{align}
The function $g_0$ is the simplest non-trivial case:
a single unconditional rotation $R_Y(\pi/3)$ encodes the
constant amplitude $a=\tfrac{1}{4}$.
It serves as a calibration reference for hardware pulse errors.
The functions $g_1$ and $g_2$ are the benchmark functions of
\cref{ex:sin2,ex:sin2_deg2}.

\begin{remark}[Deliberate choice of test functions]
\label{rem:test_function_design}
The three functions are chosen so that their exact amplitudes
on the $n=2$ midpoint grid, $a_{g_0}=\tfrac{1}{4}$ and
$a_{g_1}=a_{g_2}=\tfrac{1}{2}$, place them in the two
structurally distinct regimes of the MLAE model
$p_k(a)=\sin^2\!\bigl((2k+1)\arcsin\!\sqrt{a}\bigr)$
simultaneously.
\begin{enumerate}[label=(\roman*)]
\item \emph{Degeneracy regime} ($g_0$, $a=\tfrac{1}{4}$).
Here $p_1\!\bigl(\tfrac{1}{4}\bigr)=1$, so the $k=1$ circuit
is saturated at the true amplitude and its Fisher information
$I_1\!\bigl(\tfrac{1}{4}\bigr)$ is undefined
(cf.\ \eqref{eq:fisher_explicit} and the remark following it).
The joint likelihood under $\mathcal{K}=\{0,1\}$ is bimodal
under realistic noise; $\mathcal{K}=\{0,2\}$ resolves the
degeneracy since $p_2\!\bigl(\tfrac{1}{4}\bigr)=\tfrac{1}{4}\neq1$.
\item \emph{Uniform-measurement regime} ($g_1$, $g_2$, $a=\tfrac{1}{2}$).
Here $p_k\!\bigl(\tfrac{1}{2}\bigr)=\tfrac{1}{2}$ for every
$k\geq 0$ (\cref{rem:uniform_k1}), so all circuits produce
equiprobable outcomes regardless of amplification level.
The $k=1$ uniformity observed in the hardware data is therefore
a theoretical property of the schedule, not a decoherence signature.
\end{enumerate}
Together, the three functions probe circuit-depth feasibility
(via the $\Gclass{2}{d}$ stratum), estimation degeneracy
(via $a=\tfrac{1}{4}$), and estimation uniformity
(via $a=\tfrac{1}{2}$).
\end{remark}

All circuits are implemented in SpinQit \cite{SpinQit2022} using the
factorisation of \cref{thm:circuit_depth}; the degree membership is
verified by the classical script
\texttt{00\_check\_function\_affinity.py}~\cite{spinqit_repo}
before each run.
For the IBM Kingston experiments, circuits are built via the Qiskit
SDK~\cite{IBMQuantum2024}, transpiled with \texttt{optimization\_level=0}
to prevent the default gate-fusion pass from altering the CCRy
decomposition, and submitted via the \texttt{SamplerV2} primitive
of the Qiskit Runtime API.
Both hardware campaigns use the midpoint quadrature rule,
schedule $\mathcal{K}=\{0,1\}$ (and $\{0,1,2\}$ for $g_0$),
and $N=2048$ shots per level.
Raw job data and post-processing scripts are available in
\cite{spinqit_repo}.
Explicit circuit diagrams are given in \cref{fig:circuits,app:circuit_proof}.

\subsection{Simulator results}
\label{subsec:sim_validation}

\Cref{tab:results_sim} reports MLAE estimates from the noise-free
SpinQit simulator.
All nine entries achieve absolute error below $10^{-7}$, confirming
\cref{thm:circuit_depth,thm:pareto} to machine precision.

\begin{table}[h]
\centering
\caption{MLAE estimates on the SpinQit simulator for
$g_0|_{\mathcal{X}_2}\in\Gclass{2}{0}$ ($d=0$, 1 gate),
$g_1|_{\mathcal{X}_2}\in\Gclass{2}{1}$ ($d=1$, 3 gates), and
$g_2|_{\mathcal{X}_2}\in\Gclass{2}{2}\setminus\Gclass{2}{1}$ ($d=2$, 4 gates).
Schedule $\mathcal{K}=\{0,1\}$, 2048 shots per level.
Exact values: $\tfrac{1}{4}$ for $g_0$, $\tfrac{1}{2}$ for $g_1, g_2$.}
\label{tab:results_sim}
\begin{threeparttable}
\renewcommand{\arraystretch}{1.3}
\begin{tabular}{llcc}
\toprule
Function & Rule & $\hat{I}$ & Abs.\ error \\
\midrule
$g_0$ ($d=0$, 1 gate)
  & Left Riemann  & $0.2499999974$ & $<10^{-7}$ \\
  & Midpoint      & $0.2499999974$ & $<10^{-7}$ \\
  & Right Riemann & $0.2499999974$ & $<10^{-7}$ \\
  & Simpson       & $0.2499999974$ & $<10^{-7}$ \\
\midrule
$g_1$ ($d=1$, 3 gates)
  & Left Riemann  & $0.3749999974$ & $1.25\times10^{-1}$\tnote{a} \\
  & Midpoint      & $0.4999999987$ & $<10^{-7}$\tnote{b} \\
  & Right Riemann & $0.6249999976$ & $1.25\times10^{-1}$\tnote{a} \\
  & Simpson       & $0.4999999983$ & $<10^{-7}$\tnote{b} \\
\midrule
$g_2$ ($d=2$, 4 gates)
  & Left Riemann  & $0.4999999987$ & $<10^{-7}$\tnote{b} \\
  & Midpoint      & $0.4999999987$ & $<10^{-7}$\tnote{b} \\
  & Right Riemann & $0.4999999987$ & $<10^{-7}$\tnote{b} \\
  & Simpson       & $0.4999999987$ & $<10^{-7}$\tnote{b} \\
\bottomrule
\end{tabular}
\begin{tablenotes}
\footnotesize
\item[a] The exact discretisation error is
  $|I[g_1] - R_2^{\mathrm{left}}[g_1]|
   = \bigl|\tfrac{1}{2} - \tfrac{1}{4}(g_1(0)+g_1(\tfrac{1}{4})
     +g_1(\tfrac{1}{2}))\bigr|
   = \bigl|\tfrac{1}{2} - \tfrac{1}{4}(0+\tfrac{1}{2}+1)\bigr|
   = \tfrac{1}{8} = 1.25\times10^{-1}$,
  matching the observed value exactly.
  This is consistent with the bound \eqref{eq:error_left}:
  $E_2^{\mathrm{left}} \leq \tfrac{1}{2}\|g_1'\|_\infty/4
   = \tfrac{\pi}{32}\approx 0.098$;
  the bound is not tight because $\|g_1'\|_\infty=\pi/4$ is not
  attained at the left-endpoint error quadrature points.
  The right Riemann error is identical by symmetry of $g_1$
  on $\mathcal{X}_2$.
\item[b] Machine-precision accuracy: $p_k\bigl(\tfrac{1}{2}\bigr)
  = \tfrac{1}{2}$ for all $k\geq 0$ when $a=\tfrac{1}{2}$ exactly;
  see \cref{rem:uniform_k1}.
  For $g_2$, this holds for all rules because the discrete sum
  $\tfrac{1}{4}\sum_i g_2(x_i)=\tfrac{1}{2}$ exactly by symmetry
  of $\sin^2(\pi x)$ on the $n=2$ grid.
\end{tablenotes}
\end{threeparttable}
\end{table}

\begin{remark}[Uniform $k=1$ measurements at $a=\tfrac{1}{2}$]
\label{rem:uniform_k1}
When the true amplitude is $a=\tfrac{1}{2}$, the MLAE model gives
\begin{equation}
  p_k\!\bigl(\tfrac{1}{2}\bigr)
  = \sin^2\!\Bigl((2k+1)\arcsin\!\bigl(\sqrt{\tfrac{1}{2}}\bigr)\Bigr)
  = \sin^2\!\bigl((2k+1)\tfrac{\pi}{4}\bigr)
  = \tfrac{1}{2}
  \label{eq:uniform_at_half}
\end{equation}
for every $k\geq 0$.
Both the $k=0$ and $k=1$ circuits then produce uniform measurement
distributions, and the MLE converges to $\hat\theta=\pi/4$ from the
$k=0$ data alone.
This is a theoretical property of the schedule $\mathcal{K}=\{0,1\}$
at $a=\tfrac{1}{2}$, not a decoherence signature.
The same phenomenon appears in the hardware results for $g_1$ and $g_2$
(\cref{tab:pk_kingston,tab:results_g1_hw}), where it must be distinguished from
the decoherence-induced uniformity observed for $g_0$ and discussed in
\cref{subsec:hw_discussion}.
\end{remark}

\subsection{Hardware illustration}
\label{subsec:hw_results}

We validate the theoretical predictions on two hardware platforms
that differ substantially in scale, technology, and coherence budget,
providing complementary perspectives on the practical relevance of
the $\Gclass{n}{d}$ hierarchy.
The \emph{SpinQ Triangulum} is a 3-qubit NMR processor
\cite{SpinQ2021} representing a low-cost, laboratory-scale device
with a tight line-depth limit of~60.
The \emph{IBM~Kingston} is a 127-qubit superconducting processor
\cite{IBMQuantum2024} accessed via the Qiskit Runtime cloud API,
representing an industrial-scale device without an operative depth
constraint for the circuits studied here.
Both devices run the same three test functions
$g_0|_{\mathcal{X}_2}\in\Gclass{2}{0}$, $g_1|_{\mathcal{X}_2}\in\Gclass{2}{1}$,
$g_2|_{\mathcal{X}_2}\in\Gclass{2}{2}\setminus\Gclass{2}{1}$
with schedule $\mathcal{K}=\{0,1,2\}$ and 2048 shots per level,
using the midpoint quadrature rule throughout.

\paragraph{Triangulum: proof of concept on minimal hardware.}
On the Triangulum, $g_0$ and $g_1$ compile within the
line-depth limit; $g_2$ does not.
The 4-gate encoding of $g_2$ requires the doubly-controlled
gate $C^{\{0,1\}}R_Y(-\pi)$, which after NMR compilation by
SpinQit \cite{SpinQit2022} exceeds 60 lines for all values of~$k$.
This is a direct consequence of \cref{thm:circuit_depth}: the
$\mathcal{O}(n^2)$ gate count of $\Gclass{n}{2}$ circuits places
them beyond the coherence budget of the Triangulum, while the
$\mathcal{O}(n)$ gate count of $\Gclass{n}{1}$ does not.
The hierarchy thus manifests on this device as a sharp feasibility
threshold rather than a gradual accuracy degradation.

\Cref{tab:results_g0_hw,tab:results_g1_hw} report MLAE estimates
for $g_0$ and $g_1$ on the Triangulum.
For $g_1|_{\mathcal{X}_2}\in\Gclass{2}{1}$, the midpoint rule achieves
$|\hat{a}-\tfrac{1}{2}|=8.5\times10^{-3}$ (mean over 6 runs,
std $0.003$), confirming stable execution at the edge of the
coherence budget.
For $g_0|_{\mathcal{X}_2}\in\Gclass{2}{0}$, the MLAE schedule $\mathcal{K}=\{0,1\}$
produces bimodal estimates ($\hat{I}\approx0.13$ in 4 runs,
$\hat{I}\approx0.39$ in 2 runs), reflecting the structural
identifiability failure $I_1(\tfrac{1}{4})=0$ discussed in
\cref{subsec:hw_discussion}.

\begin{table}[h]
\centering
\caption{MLAE estimates for $g_0(x)=\tfrac{1}{4}$, $g_0|_{\mathcal{X}_2}\in\Gclass{2}{0}$
  ($d=0$, 1 gate) on SpinQ Triangulum, $\mathcal{K}=\{0,1\}$,
  2048 shots per level. Exact value: $\tfrac{1}{4}$.}
\label{tab:results_g0_hw}
\begin{threeparttable}
\renewcommand{\arraystretch}{1.3}
\begin{tabular}{lccc}
\toprule
Rule & $\hat{I}$ (mean) & Std & Abs.\ error \\
\midrule
Left Riemann    & $0.1329274$ & ---     & $1.17\times10^{-1}$\tnote{c} \\
Midpoint        & $0.2168730$ & $0.133$\tnote{a} & $3.31\times10^{-2}$ \\
Right Riemann   & $0.1326396$ & ---     & $1.17\times10^{-1}$\tnote{c} \\
Simpson         & $0.1888432$ & ---     & $6.12\times10^{-2}$\tnote{b} \\
\bottomrule
\end{tabular}
\begin{tablenotes}
\footnotesize
\item[a] Mean and std over 6 independent runs.
  4 runs give $\hat{I}\approx 0.13$; 2 give $\hat{I}\approx 0.39$
  (bimodal MLE, see \cref{subsec:hw_discussion}).
  The $k=0$ estimator $\hat{a}_{k=0}=\hat{p}_0$ is more robust:
  mean $0.2345\pm0.018$, corresponding to NMR pulse offset
  $\Delta\theta\approx-0.036$\,rad; see \cref{rem:nmr_calibration}.
\item[b] Computed as $(L+4M+R)/6$ using MLE midpoint mean.
\item[c] Left and right Riemann rules are single runs each.
  The midpoint rule was repeated 6 times because it is the primary
  benchmark quadrature and the MLE bimodality required statistical
  characterisation (\cref{subsec:hw_discussion}).
  Hardware access time on the Triangulum limited the total number
  of runs available.
\end{tablenotes}
\end{threeparttable}
\end{table}

\begin{table}[h]
\centering
\caption{MLAE estimates for
  $g_1(x)=\sin^2\!\bigl(\tfrac{\pi x}{2}\bigr)$, $g_1|_{\mathcal{X}_2}\in\Gclass{2}{1}$
  ($d=1$, 3 gates) on SpinQ Triangulum, $\mathcal{K}=\{0,1\}$,
  2048 shots per level. Exact value: $\tfrac{1}{2}$.}
\label{tab:results_g1_hw}
\begin{threeparttable}
\renewcommand{\arraystretch}{1.3}
\begin{tabular}{lccc}
\toprule
Rule & $\hat{I}$ (mean) & Std & Abs.\ error \\
\midrule
Left Riemann    & $0.4804039$ & ---     & $1.96\times10^{-2}$ \\
Midpoint        & $0.5085047$ & $0.003$ & $8.50\times10^{-3}$ \\
Right Riemann   & $0.5162186$ & ---     & $1.62\times10^{-2}$ \\
Simpson         & $0.5051069$ & ---     & $5.11\times10^{-3}$\tnote{a} \\
\bottomrule
\end{tabular}
\begin{tablenotes}
\footnotesize
\item[a] Computed as $(L+4M+R)/6$ using MLE midpoint mean.
  Midpoint and right Riemann give $\hat{p}_{k=1}=\tfrac{1}{2}$
  exactly — theoretical consequence of $a=\tfrac{1}{2}$
  (\cref{rem:uniform_k1}): when $a=\tfrac{1}{2}$, the model
  probability $p_k(\tfrac{1}{2})=\tfrac{1}{2}$ for all $k\geq 0$
  regardless of noise level.
  This is distinguishable from decoherence-induced uniformity
  (as observed for $g_0$, \cref{subsec:hw_discussion}) because the
  independent $k=0$ estimate $\hat{p}_0$ confirms $a\approx\tfrac{1}{2}$
  to within hardware noise, ruling out a decoherence origin.
\end{tablenotes}
\end{threeparttable}
\end{table}

\paragraph{Kingston: full validation across the hierarchy.}
On the IBM Kingston, all three functions are hardware-executable.
\Cref{tab:results_hw_kingston} reports MLAE estimates for the
optimal schedule of each function.
The per-level probabilities $\hat{p}_k$ collected in
\cref{tab:pk_kingston} confirm quantitative agreement with
the theoretical model $p_k(a)=\sin^2\bigl((2k+1)\arcsin(\sqrt{a})\bigr)$
at the measured noise levels.

\begin{table}[h]
\centering
\caption{MLAE estimates on IBM Kingston for
  $g_0|_{\mathcal{X}_2}\in\Gclass{2}{0}$ ($d=0$, 1~gate),
  $g_1|_{\mathcal{X}_2}\in\Gclass{2}{1}$ ($d=1$, 3~gates), and
  $g_2|_{\mathcal{X}_2}\in\Gclass{2}{2}\setminus\Gclass{2}{1}$ ($d=2$, 4~gates).
  Midpoint quadrature, 2048 shots per level.
  Exact values: $a=\tfrac{1}{4}$ for $g_0$, $a=\tfrac{1}{2}$
  for $g_1$ and $g_2$.}
\label{tab:results_hw_kingston}
\renewcommand{\arraystretch}{1.3}
\begin{threeparttable}
\begin{tabular}{llccc}
\toprule
Function & Schedule & $\hat{a}$ & $|\hat{a}-a|$ & $a$ \\
\midrule
$g_0$ ($d=0$, 1~gate)  & $\mathcal{K}=\{0,2\}$\tnote{a}
  & $0.25178$ & $1.78\times10^{-3}$ & $\tfrac{1}{4}$ \\
$g_1$ ($d=1$, 3~gates) & $\mathcal{K}=\{0,1\}$
  & $0.50093$ & $9.28\times10^{-4}$ & $\tfrac{1}{2}$ \\
$g_2$ ($d=2$, 4~gates) & $\mathcal{K}=\{0,1\}$
  & $0.50127$ & $1.27\times10^{-3}$ & $\tfrac{1}{2}$ \\
\bottomrule
\end{tabular}
\begin{tablenotes}
\footnotesize
\item[a] $\mathcal{K}=\{0,2\}$ skips the $k=1$ level, where
  $p_1(\tfrac{1}{4})=1$ exactly and the Fisher information $I_1(\tfrac{1}{4})$
  is undefined (cf.\ \eqref{eq:fisher_explicit} and
  \cref{prop:admissible_schedule,ex:degeneracy_g0});
  see \cref{subsec:hw_discussion}.
  The na\"{i}ve schedule $\mathcal{K}=\{0,1\}$ yields
  $|\hat{a}-\tfrac{1}{4}|=8.4\times10^{-2}$, 48~times worse.
\end{tablenotes}
\end{threeparttable}
\end{table}

\begin{table}[h]
\centering
\caption{Observed ancilla probabilities $\hat{p}_k$ on IBM Kingston,
  2048 shots per level, midpoint rule.
  Theoretical value: $p_k^{\mathrm{th}}(a)=\sin^2\bigl((2k+1)\arcsin(\sqrt{a})\bigr)$.}
\label{tab:pk_kingston}
\renewcommand{\arraystretch}{1.3}
\begin{tabular}{lcccccc}
\toprule
& \multicolumn{2}{c}{$g_0$\;($a=\tfrac{1}{4}$)}
& \multicolumn{2}{c}{$g_1$\;($a=\tfrac{1}{2}$)}
& \multicolumn{2}{c}{$g_2$\;($a=\tfrac{1}{2}$)} \\
\cmidrule(lr){2-3}\cmidrule(lr){4-5}\cmidrule(lr){6-7}
$k$ & $\hat{p}_k$ & $p_k^{\mathrm{th}}$
    & $\hat{p}_k$ & $p_k^{\mathrm{th}}$
    & $\hat{p}_k$ & $p_k^{\mathrm{th}}$ \\
\midrule
$0$ & $0.2520$ & $0.2500$ & $0.4756$ & $0.5000$ & $0.5039$ & $0.5000$ \\
$1$ & $0.9087$ & $1.0000$ & $0.4888$ & $0.5000$ & $0.4971$ & $0.5000$ \\
$2$ & $0.2412$ & $0.2500$ & $0.4580$ & $0.5000$ & $0.5146$ & $0.5000$ \\
\bottomrule
\end{tabular}
\end{table}

\subsection{Discussion}
\label{subsec:hw_discussion}

\paragraph{$g_0$: Fisher information degeneracy.}
The constant function $g_0=\tfrac{1}{4}$ has $a=\tfrac{1}{4}$,
so $p_1(\tfrac{1}{4})=\sin^2\bigl(3\arcsin(\tfrac{1}{2})\bigr)
=\sin^2(\tfrac{\pi}{2})=1$ exactly.
The Fisher information $I_1\!\bigl(\tfrac{1}{4}\bigr)$ is undefined
at the true parameter value: although the denominator $a(1-a)$ of
\eqref{eq:fisher_explicit} is nonzero at $a=\tfrac{1}{4}$, the formula
requires $p_k(a)\notin\{0,1\}$, and here $p_1(\tfrac{1}{4})=1$ violates
that condition.
Including $k=1$ in the MLAE schedule therefore adds noise without
adding information.
On the Triangulum, this manifests as bimodal MLE under
$\mathcal{K}=\{0,1\}$: hardware noise attenuates $\hat{p}_1$
below~1, creating two nearly equiprobable likelihood maxima
(4~runs give $\hat{I}\approx0.13$, 2~runs give $\hat{I}\approx0.39$).
On Kingston, the attenuation is smaller ($\hat{p}_1=0.909$ vs
$p_1^{\mathrm{th}}=1$, corresponding to a depolarising noise level
$\varepsilon\approx18\%$ at the $k=1$ circuit depth), but
$\mathcal{K}=\{0,1\}$ still gives $|\hat{a}-\tfrac{1}{4}|=8.4\times10^{-2}$,
48~times worse than $\mathcal{K}=\{0,2\}$, which correctly bypasses
the degenerate level.
The prescription is therefore device-independent: for any~$a$ such
that $p_k(a)=1$ for some~$k$, that level should be excluded from
the MLAE schedule \cite{SuzukiEtAl2019}.
Under $\mathcal{K}=\{0,2\}$, Kingston achieves
$|\hat{a}-\tfrac{1}{4}|=1.78\times10^{-3}$ from a single
2048-shot run, consistent with the $\mathcal{O}(1/\sqrt{N})$
classical baseline at $k=0$.

\paragraph{$g_1$: cross-platform validation of the $d=1$ circuit.}
For $g_1|_{\mathcal{X}_2}\in\Gclass{2}{1}$ with $a=\tfrac{1}{2}$, the model gives
$p_k(\tfrac{1}{2})=\tfrac{1}{2}$ for all~$k$ (\cref{rem:uniform_k1}),
so depolarising noise at any level leaves $\hat{p}_k$ invariant
around~$\tfrac{1}{2}$.
The measured values on Kingston ($\hat{p}_0=0.476$,
$\hat{p}_1=0.489$, $\hat{p}_2=0.458$) are all within $2.2\sigma$
of the theoretical value, confirming that the $k=0,1$ circuits
operate well within the coherence budget of the superconducting
device.
Under $\mathcal{K}=\{0,1\}$, Kingston achieves
$|\hat{a}-\tfrac{1}{2}|=9.3\times10^{-4}$, roughly $9\times$
better than the Triangulum result of $8.5\times10^{-3}$ at the
same shot count.
The improvement reflects the lower depolarising noise of the
superconducting gate set rather than a difference in the
encoding circuit, confirming that the $\Gclass{2}{1}$ encoding
is portable across platforms and that hardware fidelity is the
binding constraint at this circuit depth.

\paragraph{$g_2$: hierarchy confirmation across platforms.}
The degree-2 function $g_2=\sin^2(\pi x)$ requires the
doubly-controlled gate $C^{\{0,1\}}R_Y(-\pi)$, the single gate
that distinguishes $\Gclass{2}{2}$ from $\Gclass{2}{1}$ in
\cref{thm:circuit_depth}.
On the Triangulum, its inclusion pushes all circuits beyond the
line-depth limit of~60, making $g_2$ entirely infeasible.
On Kingston, the same gate transpiles correctly within the native
ECR gate set, and all three $\hat{p}_k$ values for $g_2$
lie within $1.3\sigma$ of $\tfrac{1}{2}$, indistinguishable from
ideal behaviour.
Under $\mathcal{K}=\{0,1\}$, Kingston achieves
$|\hat{a}-\tfrac{1}{2}|=1.27\times10^{-3}$, comparable to the
$g_1$ result at the same shot count ($9.28\times10^{-4}$).
Taken together, the two platforms illustrate that the
$\Gclass{n}{d}$ stratum is not merely a theoretical complexity
parameter but a practical hardware feasibility criterion whose
threshold depends on the coherence budget of the target device:
$d=1$ is universally feasible at $n=2$, while $d=2$ requires
a device whose gate fidelity accommodates the additional
controlled rotation.

\paragraph{Comparison with prior work.}
The IBM~Q Ourense experiments of \cite{CarreraVazquezWoerner2020}
used $k_{\max}=16$ and up to 8192~shots to reach midpoint errors of
order $10^{-4}$ for~$g_1$.
With $k_{\max}=1$ and 2048~shots, Kingston reaches
$9.3\times10^{-4}$ --- within one order of magnitude at $64\times$
fewer shots --- and the Triangulum reaches $8.5\times10^{-3}$.
The $\Gclass{2}{1}$ encoding therefore remains competitive under
shallow schedules, and viable even on a severely depth-constrained
NMR device.

\begin{remark}[NMR pulse calibration errors]
\label{rem:nmr_calibration}
On the Triangulum, the $k=0$ estimator $\hat{p}_0\approx0.2345$
(mean over 6 runs, std $0.018$) for $g_0$ deviates from the
theoretical $a=\tfrac{1}{4}$, yielding an effective rotation angle
$\hat\theta_{\mathrm{eff}}=2\arcsin(\sqrt{0.2345})\approx1.011$\,rad
against the nominal $\theta=\pi/3\approx1.047$\,rad,
a pulse-length offset $\Delta\theta\approx-0.036$\,rad ($-2.1^\circ$).
This is a well-known systematic in NMR quantum computing
\cite{VandersypenChuang2005} that does not affect $g_1$,
because the dominant encoding rotations for the midpoint rule
lie near $\theta=\pi/2$ where
$\tfrac{d}{d\theta}\sin^2(\theta/2)=\tfrac{1}{2}\sin\theta$ vanishes.
On Kingston, no analogous offset is observed for $g_0$:
$\hat{p}_0=0.252$, within $0.2\sigma$ of $a=\tfrac{1}{4}$.
Resolving the Triangulum offset requires vendor-level
recalibration of the pulse parameters.
\end{remark}

\section{Conclusions and Open Problems}
\label{sec:conclusions}

We have introduced the angle-structure hierarchy $\{\Gclass{n}{d}\}$ as
a framework for characterising the encoding-circuit complexity of state
preparation for QAE-based numerical integration.
The multilinear degree of the angle map $\Theta_{g|_{\mathcal{X}_n}}:\mathbb{B}^n\to[0,\pi]$
is a classically computable invariant (via the Walsh--Hadamard transform,
in $\mathcal{O}(n2^n)$ time) that controls the gate count of the
canonical monomial-factorisation circuit: $g|_{\mathcal{X}_n}\in\Gclass{n}{d}$
implies an encoding with exactly $\binom{n}{\leq d}$ controlled-$\Ry$
gates (an upper bound, tight at $d=n$ by \cite{ShendeBullockMarkov2006}).

The main theoretical results are:
\begin{itemize}
  \item the \emph{encoding-circuit theorem} (\cref{thm:circuit_depth}),
        giving the gate count $\binom{n}{\leq d}$ of the monomial
        factorisation circuit, which is an upper bound on encoding
        complexity and is tight at $d=n$;
  \item the \emph{depth-accuracy trade-off} (\cref{thm:pareto}),
        showing that the total gate count to achieve $\varepsilon$-accuracy
        is $\mathcal{O}((\log1/\varepsilon)^d/\varepsilon)$.  At the
        same discretisation level, the corresponding classical Monte
        Carlo sample count scales as
        $\mathcal{O}(\varepsilon^{-(2+1/p)})$
        (cf.\ \cref{rem:scope} for the scope of this comparison);
  \item the \emph{decoupling of encoding degree from regularity}
        (\cref{lem:gs_construction}), constructing functions
        $g_s:[0,1]\to[0,1]$ with
        $g_s|_{\mathcal{X}_n}\in\Gclass{n}{1}$ of arbitrarily low
        Sobolev regularity, whose encoding depth is $\mathcal{O}(n)$
        irrespective of~$s$; the angle-structure degree is therefore
        not a smoothness invariant.
\end{itemize}

The computational illustration on SpinQ Triangulum and IBM Kingston
(\cref{sec:experiments}) confirms that the $\Gclass{n}{d}$ degree
stratum is a practical hardware-feasibility criterion at $n=2$: the
feasibility threshold is device-dependent (governed by the coherence
budget), while the gate-count formula of \cref{thm:circuit_depth} is
platform-independent.
To the authors' knowledge this is the first cross-platform experimental
comparison of QAE circuits stratified by the angle-structure degree,
though we do not claim a comprehensive survey of the QAE hardware
literature.
Extensions to multivariate functions and Markov-chain transition kernels
follow directly from \cref{cor:multivariate}: the gate-count formula
$\binom{n}{\leq d}$ depends only on the total qubit count $n=\sum_k n_k$
and the degree of $\Theta_{g|_{\mathcal{X}_n}}$ on $\mathbb{B}^n$,
regardless of how the bits are partitioned across dimensions or time steps;
a detailed treatment of the Heston model and related applications is
identified as a direction for future work.

\paragraph{Open problems.}

\begin{enumerate}[label=(\roman*)]
\item
  \textbf{Tightness of the gate-count bound.}
  \cref{thm:circuit_depth} gives an upper bound of $\binom{n}{\leq d}$
  controlled-$\Ry$ gates for the encoding of $g\in\Gclass{n}{d}$;
  this bound is tight at $d=n$ by \cite{ShendeBullockMarkov2006}.
  Is $\binom{n}{\leq d}$ the minimum gate count for $d < n$, or can
  some functions in $\Gclass{n}{d}$ be encoded more efficiently using
  ancilla qubits or structured decompositions?
  Resolving this question would determine whether "encoding-circuit
  complexity" in the sense of this paper captures the true gate
  complexity or only an efficiently constructible upper bound.

\item
  \textbf{Optimal degree for a given $g$.}
  Given $g\in C^\alpha[0,1]$, what is the minimum $d^*$ such that
  $g|_{\mathcal{X}_n}\in\Gclass{n}{d^*}$, and how
  does $d^*$ depend on~$n$ and~$\alpha$?

\item
  \textbf{Approximation within $\Gclass{n}{d}$.}
  For $g\notin\Gclass{n}{d}$, what is the best approximation
  $\tilde{g}\in\Gclass{n}{d}$ in the $L^2(\mathcal{X}_n)$ sense?
  Truncating the multilinear expansion of $\Thetag$ to degree~$d$ is a
  natural candidate; bounding the resulting integration error in terms
  of the tail $\sum_{|S|>d}|\wh{\Theta}{S}|$ would yield a practical
  degree-selection criterion.

\item
  \textbf{Heston model and Markov kernels.}
  \cref{cor:multivariate} establishes that the hierarchy extends verbatim
  to multivariate functions and Markov-chain transition kernels.
  A concrete open problem is the full Möbius analysis of the
  \emph{Heston stochastic volatility model}
  \cite{CarreraVazquezWoerner2020}: the linearised discretisation places
  the transition kernel in $\Gclass{4}{2}\setminus\Gclass{4}{1}$ due to
  the variance-price correlation, but a self-contained treatment
  — including the explicit SDE system, the discretisation scheme, the
  Möbius coefficient table, and numerical values for standard Heston
  parameters — is left for future work.

\item
  \textbf{Extension to non-uniform grids.}
  The hierarchy is defined for uniform grids; extending to Gaussian
  quadrature nodes would yield higher-order convergence at the cost
  of a more complex angle map.

\item
  \textbf{Noise robustness.}
  Under realistic noise models, what is the minimum degree~$d$ such
  that the QAE sampling gain remains visible, and how does the
  depth-versus-accuracy trade-off curve deform in the presence of noise?
\end{enumerate}

\subsection*{Acknowledgements}

The authors thank the SpinQ Technology team for technical support with
the Triangulum device and access to the SpinQit framework,
and acknowledge the use of IBM Quantum services for the Kingston
experiments.
This research was funded by the grant number COMCUANTICA/007 from the Generalitat Valenciana, 
Spain, by the grant number INDI24/17 from the Universidad CEU Cardenal Herrera, Spain.

\subsection*{Conflicts of interest}
The authors declare no conflicts of interest.

\subsection*{Declaration of generative AI and AI-assisted technologies in the writing process}
During the preparation of this work the authors used Claude (Anthropic)
to improve the readability and language of the manuscript.
After using this tool/service, the authors reviewed and edited the
content as needed and take full responsibility for the content of the
published article.

\appendix
\section{Quantum Computing Model}
\label{app:qc}

This appendix develops the quantum computing model in standard
linear-algebra language, consistent with \cref{subsec:qprob}.
\Cref{app:dirac} translates every object into Dirac bra-ket notation.
A detailed treatment is given in \cite{FalcoEtAl2025,FalcoMatthies2026}.

\subsection{Quantum circuits as unitary maps}

An $n$-qubit quantum computer operates on $V_n = \mathbb{C}^{2^n}$
with standard orthonormal basis $\{\mathbf{u}_k\}_{k=0}^{2^n-1}$.
Each basis vector factors as
\[
  \mathbf{u}_k = \mathbf{u}_{b_0(k)}^{(0)} \otimes \cdots
                 \otimes \mathbf{u}_{b_{n-1}(k)}^{(n-1)},
\]
where $b_j(k)\in\{0,1\}$ is the $j$-th bit of $k$ in the binary
expansion~\eqref{eq:index_map}, and $\mathbf{u}_0^{(\ell)},
\mathbf{u}_1^{(\ell)}$ denote the standard basis of the $\ell$-th
qubit factor $\mathbb{C}^2_{(\ell)}$.
A \emph{quantum circuit} is a product of elementary unitary matrices
whose composition yields $U \in \mathrm{U}(2^n)$.
The gate dictionary used throughout is $\{R_Y(\theta), X, \mathrm{CNOT}\}$
where
\[
R_Y(\theta) = \begin{pmatrix}
\cos\frac{\theta}{2} & -\sin\frac{\theta}{2} \\
\sin\frac{\theta}{2} & \phantom{-}\cos\frac{\theta}{2}
\end{pmatrix} \in \mathrm{U}(2),
\qquad
X = \begin{pmatrix} 0 & 1 \\ 1 & 0 \end{pmatrix} \in \mathrm{U}(2),
\]
and $\mathrm{CNOT}(c\to t)$ acts on the two-qubit factor
$\mathbb{C}^2_{(c)}\otimes\mathbb{C}^2_{(t)}$ of $V_n$ by
\begin{equation*}
  \mathrm{CNOT}(c\to t)\,(\mathbf{u}_a^{(c)} \otimes \mathbf{u}_b^{(t)})
  = \mathbf{u}_a^{(c)} \otimes \mathbf{u}_{a \oplus b}^{(t)},
  \qquad a,b \in \{0,1\},
\end{equation*}
where $\oplus$ denotes addition modulo~$2$: the gate flips qubit~$t$
if and only if qubit~$c$ is in state~$\mathbf{u}_1^{(c)}$, and acts
as the identity on all remaining qubits $\ell \notin \{c,t\}$.
In the ordered basis
$\{\mathbf{u}_0^{(c)}\otimes\mathbf{u}_0^{(t)},\,
   \mathbf{u}_0^{(c)}\otimes\mathbf{u}_1^{(t)},\,
   \mathbf{u}_1^{(c)}\otimes\mathbf{u}_0^{(t)},\,
   \mathbf{u}_1^{(c)}\otimes\mathbf{u}_1^{(t)}\}$
of $\mathbb{C}^2_{(c)}\otimes\mathbb{C}^2_{(t)}$
this action yields the matrix
\[
\mathrm{CNOT}(c\to t) = \begin{pmatrix}
1 & 0 & 0 & 0 \\
0 & 1 & 0 & 0 \\
0 & 0 & 0 & 1 \\
0 & 0 & 1 & 0
\end{pmatrix}.
\]

Although $R_Y(\theta)$ and $X$ act on $\mathbb{C}^2$, they extend to
$\mathrm{U}(2^n)$ by tensoring with the identity: for a gate
$U\in\mathrm{U}(2)$ acting on qubit~$i$, the corresponding $n$-qubit
operator is
\[
  U_{(i)} := I_2^{\otimes i} \otimes U \otimes I_2^{\otimes(n-i-1)}
  \;\in\;\mathrm{U}(2^n),
  \qquad i\in\{0,\ldots,n-1\}.
\]
For example, $R_Y(\theta)_{(2)}$ applies the rotation $R_Y(\theta)$ to
qubit~$2$ while leaving qubits $0,\ldots,1$ unchanged.
Similarly, $\mathrm{CNOT}(c\to t)$ acts non-trivially only on the
two-qubit factor $\mathbb{C}^2\otimes\mathbb{C}^2$ indexed by
$(c,t)$, with identity on the remaining $n-2$ qubits.
Starting from the initial vector $\mathbf{e}_0 = \mathbf{u}_0$,
the circuit maps $\mathbf{e}_0 \mapsto U\mathbf{e}_0$.

\subsection{Quantum random variables and measurement}

The projectors $\Pi_k = \mathbf{u}_k\mathbf{u}_k^*$ introduced in
\cref{subsec:qprob} are the measurement operators.
For a circuit $U \in \mathrm{U}(2^n)$ starting from $\mathbf{e}_0$,
measuring in the basis $\{\mathbf{u}_k\}$ yields outcome~$k$ with
probability~\eqref{eq:circuit_prob}.
For circuits on $V_n \otimes \mathbb{C}^2$ (index register plus
ancilla), the marginal probability of ancilla outcome~$1$ is given
by~\eqref{eq:marginal_linalg}; this is the central quantity
estimated throughout the paper.

\subsection{The amplitude oracle and Grover iterate}

The encoding operator $G_g$, the amplitude oracle
$A_g = G_g(H_n\otimes I_2)$ and the Grover iterate
$Q = A_g S_{\mathbf{e}_0} A_g^* S_{\mathbf{w}_0}$ were defined in
\cref{subsec:oracle}, in the same linear-algebra notation used here;
we record below only the spectral facts needed for the circuit
constructions of \cref{app:circuit_proof}.
On the two-dimensional invariant subspace spanned by
$\{A_g(\mathbf{e}_0\otimes\mathbf{u}_0^{(2)}),
   A_g(\mathbf{e}_0\otimes\mathbf{u}_1^{(2)})\}$,
$Q$ acts as a rotation by $2\arcsin(\sqrt{a})$,
yielding \eqref{eq:qae_prob} \cite{BrassardHoyerMoscaTapp2002}.

\subsection{Spin-Echo circuit optimisation}

Each evaluation of $Q^kA_g$ nominally requires $2k+1$ applications
of the encoding layer $G_g$.
The identity \cite{CarreraVazquezWoerner2020}
\begin{equation}
  R_Y(-f)\,(I_n\otimes Z)\,R_Y(f) = R_Y(-2f)\,(I_n\otimes Z)
  \label{eq:spinecho_id}
\end{equation}
allows one application of $G_g$ per Grover iterate to be absorbed,
reducing the encoding count from $2k+1$ to $k+1$.
This is the Spin-Echo optimisation used in \cref{sec:experiments}.

\subsection{Dictionary: linear-algebra and Dirac notation}
\label{app:dirac}

The Dirac bra-ket notation common in the quantum computing literature
is a typographic convention on top of the same linear-algebra structures.
The following table gives the complete correspondence.

\begin{center}
\renewcommand{\arraystretch}{1.5}
\begin{tabular}{lll}
\toprule
Linear-algebra object & Dirac notation & Reading \\
\midrule
$\mathbf{u}_k \in \mathbb{C}^N$ & $\ket{k}$ & ``ket $k$'' (column vector) \\
$\mathbf{u}_k^* \in (\mathbb{C}^N)^*$ & $\bra{k}$ & ``bra $k$'' (row vector / covector) \\
$\langle \mathbf{u}_j, \mathbf{v}\rangle = \mathbf{u}_j^*\mathbf{v}$ & $\braket{j}{\psi}$ & inner product \\
$\Pi_k = \mathbf{u}_k\mathbf{u}_k^*$ & $\ket{k}\bra{k}$ & rank-one projector \\
$\mathbf{v} = \sum_k\alpha_k\mathbf{u}_k$ & $\ket{\psi}=\sum_k\alpha_k\ket{k}$ & state expansion \\
$U\mathbf{e}_0$ & $U\ket{0}^{\otimes n}$ & circuit output \\
$\mathbf{u}_i\otimes\mathbf{u}_b^{(2)}$ & $\ket{i}\otimes\ket{b}$ & product basis vector \\
$I_N\otimes\mathbf{u}_1^{(2)}(\mathbf{u}_1^{(2)})^*$ & $I_n\otimes\ket{1}\bra{1}$ & ancilla projector $\Pi_1^{\mathrm{anc}}$ \\
$\mathbf{v}^*A\mathbf{v}$ & $\bra{\psi}A\ket{\psi}$ & expectation of $A$ in state $\mathbf{v}$ \\
$S_{\mathbf{e}_0} = I - 2\mathbf{e}_0\mathbf{e}_0^*$ & $I - 2\ket{0}\bra{0}^{\otimes n+1}$ & Grover reflection \\
\bottomrule
\end{tabular}
\end{center}

The only structural difference is that Dirac notation makes the
primal/dual distinction typographic — $\ket{\cdot}$ for column
vectors, $\bra{\cdot}$ for row vectors — so that the pairing
$\braket{j}{\psi} = \mathbf{u}_j^*\mathbf{v}$ is immediate by
juxtaposition.
Every identity in this paper translates word-for-word between the
two notations via the table above.

\section{Circuit Implementation Details}
\label{app:circuit_proof}

\subsection{The affine case $\Gclass{n}{1}$}

For $g:\mathcal{X}_n\to[0,1]$ with $g\in\Gclass{n}{1}$, the angle map is affine:
$\Theta_g(b)=\hat\Theta_\emptyset+\sum_{j=0}^{n-1}\hat\Theta_{\{j\}}b_j$.
By \cref{thm:circuit_depth}, the encoding operator factors as
\begin{equation}
  G_g = \Ry(\hat\Theta_\emptyset)\cdot
        \prod_{j=0}^{n-1}C^{\{j\}}\Ry(\hat\Theta_{\{j\}}),
  \label{eq:Gg_affine}
\end{equation}
using $n+1$ gates: one unconditional rotation and $n$ single-controlled
rotations.
Although all $n+1$ gates commute (each controlled by a different index
qubit and all acting on the same ancilla), they cannot be executed
simultaneously because they share the ancilla qubit as their target.
Each gate therefore occupies a separate layer, giving circuit depth
$n+1 = \mathcal{O}(n)$.

\subsection{The quadratic case $\Gclass{n}{2}$}

For $g:\mathcal{X}_n\to[0,1]$ with $g\in\Gclass{n}{2}$, the expansion adds $\binom{n}{2}$
two-qubit-controlled terms:
\begin{equation}
  G_g = \Ry(\hat\Theta_\emptyset)\cdot
        \prod_j C^{\{j\}}\Ry(\hat\Theta_{\{j\}})\cdot
        \prod_{j<k}C^{\{j,k\}}\Ry(\hat\Theta_{\{j,k\}}),
  \label{eq:Gg_quadratic}
\end{equation}
for a total of $\binom{n}{\leq 2}=1+n+\binom{n}{2}$ gates.
Each $C^{\{j,k\}}\Ry(\alpha)$ decomposes as
$\cnot(k\to\mathrm{anc})\cdot C_j$-$\Ry(\alpha/2)\cdot
\cnot(k\to\mathrm{anc})\cdot C_j$-$\Ry(-\alpha/2)$
\cite{BarencoEtAl1995}, costing 2 CNOTs per pair,
where $\cnot(k\to\mathrm{anc})$ denotes the CNOT with control qubit~$k$
and target the ancilla qubit.

\paragraph{Example: $n=2$, $g_2=\sin^2(\pi x)$, $g_2|_{\mathcal{X}_2}\in\Gclass{2}{2}$.}
From \cref{ex:sin2_deg2}: $\hat\Theta_\emptyset=0$,
$\hat\Theta_{\{0\}}=\tfrac{\pi}{2}$,
$\hat\Theta_{\{1\}}=\pi$,
$\hat\Theta_{\{0,1\}}=-\pi$.
The gate sequence is:
(1) $C_{q_0}$-$\Ry(\tfrac{\pi}{2})$,
(2) $C_{q_1}$-$\Ry(\pi)$,
(3) $C^{\{q_0,q_1\}}$-$\Ry(-\pi)$ (the degree-2 correction,
    implemented as 2 CNOTs + 2 controlled-$\Ry$).
Total: 4 controlled-$\Ry$ gates $=\binom{2}{\leq 2}$, one more than
the 3 gates for $g_1|_{\mathcal{X}_2}\in\Gclass{2}{1}$.

\begin{figure}[t]
\centering
\includegraphics[width=\textwidth]{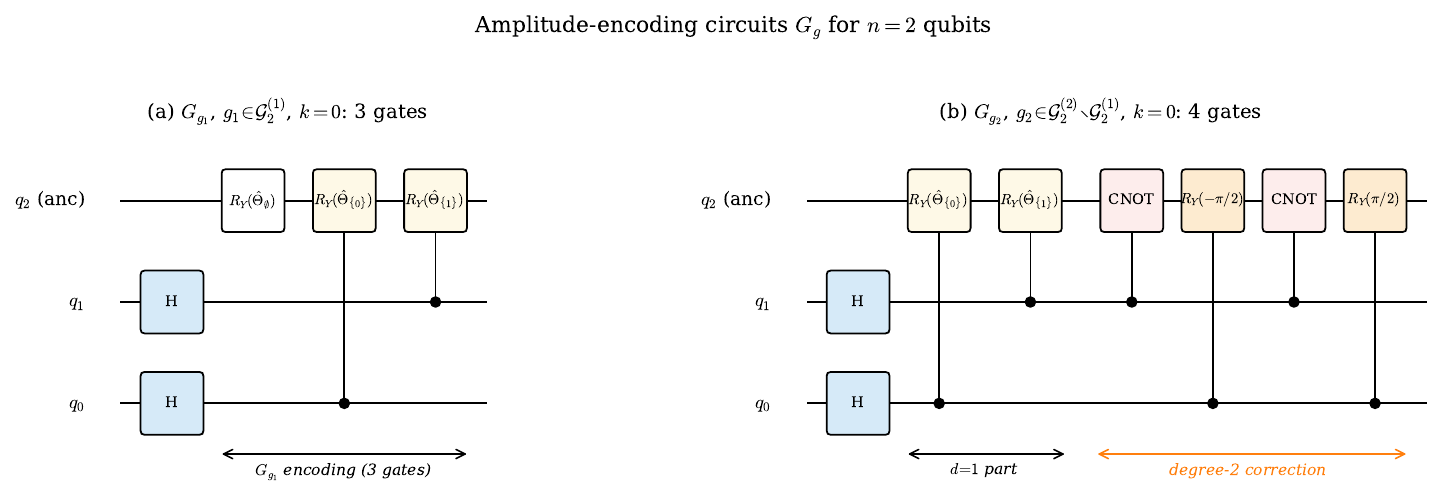}
\caption{Amplitude-encoding circuits $G_g$ for $n=2$ qubits, $k=0$.
  \textbf{(a)} $g_1|_{\mathcal{X}_2}\in\Gclass{2}{1}$ ($d=1$, 3 gates, \eqref{eq:Gg_affine}):
  one unconditional rotation $R_Y(\hat\Theta_\emptyset)$ on the ancilla
  $q_2$, one $C_{q_0}$-$R_Y(\hat\Theta_{\{0\}})$, and one
  $C_{q_1}$-$R_Y(\hat\Theta_{\{1\}})$.
  Circuit depth $\leq 4$ layers.
  \textbf{(b)} $g_2|_{\mathcal{X}_2}\in\Gclass{2}{2}\setminus\Gclass{2}{1}$ ($d=2$, 4 gates,
  \eqref{eq:Gg_quadratic}): the $d=1$ part (left brace) is followed by the
  degree-2 correction $C^{\{0,1\}}R_Y(\hat\Theta_{\{0,1\}})$ (orange brace),
  decomposed into 2 CNOTs and 2 singly-controlled rotations
  \cite{BarencoEtAl1995}.
  Hadamard gates $H$ on $q_0, q_1$ prepare the uniform superposition
  for the amplitude oracle $A_g = G_g(H_2\otimes I_2)$.}
\label{fig:circuits}
\end{figure}

\paragraph{Compiled circuit depths for hardware runs.}
For $g_1|_{\mathcal{X}_2}\in\Gclass{2}{1}$, $k=0$: Hadamard gates on both index qubits
followed by the 3-gate sequence of \eqref{eq:Gg_affine}.
Compiled depth $\leq 4$ layers on the Triangulum — well within
the line-depth limit of~60.
For $g_1$, $k=1$: the Spin-Echo optimisation of \cref{app:qc} reduces
the number of $G_g$ applications from 3 to 2, but the resulting
compiled depth still approaches or exceeds the Triangulum coherence
budget, causing the observed decoherence signature
(\cref{sec:experiments}).
On IBM Kingston the same $g_1$ circuits run without depth constraints:
there is no operative line-depth limit, and both $k=0$ and $k=1$
execute faithfully (\cref{tab:pk_kingston}).
For $g_2|_{\mathcal{X}_2}\in\Gclass{2}{2}$, $k=0$: the additional
$C^{\{0,1\}}R_Y(-\pi)$ gate raises the compiled depth to
$\approx 8$--$10$ layers (below the Triangulum limit in principle),
but the $k=1$ circuit exceeds~60 after full NMR pulse compilation by
SpinQit \cite{SpinQit2022}, making the full MLAE schedule infeasible
on the Triangulum.
On IBM Kingston, all three levels $k=0,1,2$ of the $g_2$ circuit
execute successfully, with all $\hat{p}_k$ within shot noise of the
theoretical value; see \cref{tab:pk_kingston,subsec:hw_results}.

\subsection{SpinQ Triangulum device specifications}

\begin{center}
\renewcommand{\arraystretch}{1.3}
\begin{tabular}{ll}
\toprule
Parameter & Value \\
\midrule
Number of qubits & 3 ($q_0, q_1, q_2$) \\
Connectivity & All-to-all \\
Native gates & $\Ry(\theta)$, $X$, CNOT \\
Line-depth limit & 60 \\
Coherence time $T_2$ & $\sim10$--$30$\,ms (qubit-dependent) \\
Gate time (single-qubit) & $\sim1$\,ms \\
Gate time (CNOT) & $\sim2$--$5$\,ms \\
Software & SpinQit \cite{SpinQit2022} \\
\bottomrule
\end{tabular}
\end{center}

Qubit assignment: $q_0=b_0$ (index LSB), $q_1=b_1$ (index MSB),
$q_2=\text{ancilla}$.
Code at \cite{spinqit_repo}.

\paragraph{Coherence budget for $g_2$ at $k=1$.}
The encoding of $g_2|_{\mathcal{X}_2}\in\Gclass{2}{2}$ requires the gate
$C^{\{0,1\}}R_Y(-\pi)$, which decomposes into 2 CNOTs and
2 controlled-$R_Y$ rotations (\cref{app:circuit_proof}).
For the Grover iterate at $k=1$ with Spin-Echo, the full circuit
contains approximately $2(n+1)+2=8$ multi-qubit operations.
At $\sim2$\,ms per CNOT (optimistic estimate from the table above),
this already approaches the lower end of the coherence time
$T_2\sim10$\,ms.
After SpinQit NMR compilation, which expands multi-qubit gates into
native shaped-pulse sequences, the resulting line-depth exceeds the
hardware limit of~60, confirming hardware infeasibility for $g_2$
at $k=1$.
At $k=0$ the compiled depth is $\approx8$--$10$ layers
(below the limit in principle), and future reductions in
NMR compilation overhead could enable the $d=2$ case.

\Cref{tab:depth_summary} collects the gate counts, compiled depths,
and hardware feasibility for all three test functions.

\begin{table}[h]
\centering
\caption{Gate count, compiled line-depth, and hardware feasibility
  for $n=2$ qubits, schedule $\mathcal{K}=\{0,1,2\}$, 2048 shots
  per level.
  Triangulum depth refers to compiled line-depth after NMR pulse
  expansion by SpinQit; feasibility threshold is 60 lines.
  IBM Kingston has no operative depth limit for these circuits.}
\label{tab:depth_summary}
\begin{threeparttable}
\renewcommand{\arraystretch}{1.3}
\begin{tabular}{lccccc}
\toprule
Function & Class & Gates & Triangulum & IBM Kingston & Runs \\
\midrule
$g_0=\tfrac{1}{4}$
  & $\Gclass{2}{0}$ & 1
  & Feasible\tnote{a} & Feasible
  & Tri: 6\;/\;King: 3 \\
$g_1=\sin^2\!\tfrac{\pi x}{2}$
  & $\Gclass{2}{1}$ & 3
  & $k{=}0$ only\tnote{b} & Feasible
  & Tri: 6\;/\;King: 3 \\
$g_2=\sin^2(\pi x)$
  & $\Gclass{2}{2}$ & 4
  & Infeasible\tnote{c} & Feasible
  & Tri: ---\;/\;King: 3 \\
\bottomrule
\end{tabular}
\begin{tablenotes}
\footnotesize
\item[a] Triangulum $g_0$: midpoint repeated 6 times for statistical
  characterisation of MLE bimodality
  (\cref{subsec:hw_discussion}); 1 run each for left/right Riemann.
\item[b] Triangulum $g_1$: $k=1$ compiled depth approaches or
  exceeds the coherence budget $T_2\sim10$\,ms after Spin-Echo
  optimisation; decoherence signature observed
  (\cref{subsec:hw_discussion}).
  Kingston $g_1$: all three levels $k=0,1,2$ execute faithfully
  (\cref{tab:pk_kingston}).
\item[c] Triangulum $g_2$: $C^{\{0,1\}}R_Y(-\pi)$ pushes the
  compiled line-depth beyond 60 for all $k$.
  Kingston $g_2$: all three levels execute successfully;
  $\hat{p}_k$ within $1.3\sigma$ of $\tfrac{1}{2}$
  (\cref{tab:pk_kingston}).
\end{tablenotes}
\end{threeparttable}
\end{table}

\section{Walsh--Hadamard Transform on \texorpdfstring{$\mathbb{B}^n$}{B\^{}n}}
\label{app:walsh}

This appendix collects background on multilinear polynomials and the
Walsh--Hadamard transform.
For a comprehensive reference see \cite{O'Donnell2014}.

\subsection{Multilinear coefficients}

Every $f:\mathbb{B}^n\to\R$ has the unique multilinear expansion
\eqref{eq:fourier_expansion}, with coefficients given by Möbius
inversion \eqref{eq:mobius}.
The degree $\deg(f)=\max\{|S|:\hat{f}_S\neq 0\}$, and
$\deg(\Theta_{g|_{\mathcal{X}_n}})\leq d$ is equivalent to the vanishing conditions
\eqref{eq:membership_conditions}.

\subsection{Walsh--Hadamard transform}

Identify $\mathbb{B}^n$ with $\{0,\ldots,2^n-1\}$ via $i(b)$ and
form $\mathbf{f}=(f(b))_{b\in\mathbb{B}^n}\in\R^{2^n}$.
The WHT computes $\hat{\mathbf{f}}=H_n\mathbf{f}$, where
$H_n=H^{\otimes n}$ and
$H=\frac{1}{\sqrt{2}}\bigl(\begin{smallmatrix}1&1\\1&-1\end{smallmatrix}\bigr)$.
(The unnormalised version, omitting $1/\sqrt{2}$, gives the Möbius
coefficients up to a global sign.)
The WHT is computed in-place in $n$ passes of $2^{n-1}$ butterfly
operations each, for total cost $\mathcal{O}(n2^n)$ — the same
asymptotic complexity as the Fast Fourier Transform.

\subsection{Relationship to Boolean Fourier analysis}

In Boolean Fourier analysis \cite{O'Donnell2014}, functions on
$\{-1,+1\}^n$ are expanded in the parity basis $\chi_S(x)=\prod_{j\in S}x_j$.
Under $b_j\leftrightarrow(1-x_j)/2$, the monomial $\chimon{S}(b)$ maps
to a combination of parities of degree $\leq|S|$, so the two notions
of degree coincide.
We use the $\{0,1\}$-encoding because it matches the computational
basis and makes \eqref{eq:mobius} more direct.

\begin{remark}[Relation to the cryptographic Walsh-support literature]
\label{rem:walsh_support_crypto}
The $\mathbb{F}_2$-valued counterpart of this framework — in which one
studies the support of the Walsh transform of a Boolean function
$f:\mathbb{F}_2^n\to\mathbb{F}_2$ as a subset of $\mathbb{F}_2^n$ — has
been investigated extensively in the cryptographic literature;
see \cite{CarletMesnager2004,CarletPastorTornero2025}.
In that setting the analogue of the condition $\deg(\Theta_g)\leq d$ is
the requirement that the Walsh support be contained in a union of flats
of bounded dimension, and a set $S\subset\mathbb{F}_2^n$ is called
\emph{fully balanced} if its Fourier--Hadamard transform takes only the
values $\{-|S|,0,|S|\}$; by \cite[Theorem~1]{CarletPastorTornero2025}
this is equivalent to $S$ being an affine space, which corresponds to
the degenerate case $d=0$ of our hierarchy.
The present paper works in the real-valued regime
$\Theta_g:\mathbb{B}^n\to[0,\pi]$, where the relevant invariant is the
multilinear degree rather than the spectral support structure over
$\mathbb{F}_2^n$, and the application is circuit-complexity analysis for
QAE-based numerical integration rather than cryptographic distinguishing
of function classes.
\end{remark}

\bibliographystyle{plain}

\end{document}